\documentclass[11pt]{article}
\usepackage[margin=1in]{geometry}

\usepackage[utf8]{inputenc}
\usepackage[english]{babel}
\usepackage{lipsum}
\usepackage{enumitem}
\usepackage{bm}
\usepackage{comment}
\usepackage{upgreek}
\usepackage{bbold}
\usepackage{array}
\usepackage{amsmath}
\usepackage{graphicx}
\usepackage{algcompatible}
\usepackage{amssymb}
\usepackage{mathtools}
\usepackage{algorithm}
\usepackage{amsthm}
\usepackage{relsize}
\usepackage{standalone}
\usepackage{tikz}
\usetikzlibrary{positioning}

\newtheorem{theorem}{Theorem}[section]
\newtheorem{lemma}[theorem]{Lemma}

\newtheorem{definition}[theorem]{Definition}

\usepackage{microtype} %improves the spacing between words and letters

\usepackage[svgnames]{xcolor} % Enabling mixing colors and color's call by 'svgnames'

\definecolor{MyColor1}{rgb}{0.2,0.4,0.6} %mix personal color

\newcommand{\yaron}[1]{}
\newcommand{\seffi}[1]{}

\newcommand{\etal}{et al.~}

\usepackage{titlesec}
\usepackage{caption}
\usepackage{subcaption}
\newcommand{\set}[1]{\left\{ #1 \right\}}

\usepackage{xspace}
\newcommand{\OPT}{\textit{OPT}\xspace}

\newcommand{\pwtw}{paging with time windows\xspace}
\newcommand{\cost}{cost}

\makeatletter
\let\reftagform@=\tagform@
\def\tagform@#1{\maketag@@@{(\ignorespaces\textcolor{red}{#1}\unskip\@@italiccorr)}}
\renewcommand{\eqref}[1]{\textup{\reftagform@{\ref{#1}}}}
\makeatother
\usepackage{hyperref}
\hypersetup{colorlinks=true}

\usepackage{fancyhdr}

\usepackage{multirow}

\title{Tight Bounds for Paging with Time Windows
and Joint Replenishment with Varying Opening Costs}
\author{
Joseph (Seffi) Naor%
\thanks{Supported in part by ISF grant 3001/24
and United States--Israel BSF grant 2022418.}\\
Computer Science Department, Technion\\
Haifa, Israel\\
\texttt{naor@cs.technion.ac.il}
\and
Tomer Tsachor\\
Computer Science Department, Technion\\
Haifa, Israel\\
\texttt{tomer.ts@cs.technion.ac.il}
}
\date{\today}

\begin{document}

\begin{titlepage}
\clearpage\thispagestyle{empty}

\maketitle
\thispagestyle{empty}

\begin{abstract}
We study the paging with time windows problem, a natural generalization of classical paging, in which each request must be served within a specified time interval. This model allows the algorithm to delay service and batch service requests over time, but also introduces significant challenges due to overlapping requests. While the problem is well understood in the unweighted setting, where optimal $O(\log k)$-competitive algorithms are known, the weighted case remains more elusive. The best previously known algorithm, due to Gupta, Kumar, and Panigrahi~\cite{GKP20}, achieves an $O(\log k \log n)$-competitive ratio, leaving a gap between weighted and unweighted settings.

We obtain an $O(\log k)$-competitive algorithm for weighted paging with time windows, matching the optimal competitive ratio. The core of our approach is a structural decomposition that separates the evolution of the cache from the scheduling of page loads. We show that, given a sequence of cache configurations, computing a near-optimal sequence of page loads reduces to a variant of the Joint Replenishment Problem with deadlines (JRP-D) with time-dependent opening costs. We design a constant-competitive online algorithm for this variant and prove that the reduction is tight. Combining this with an existing online construction of cache configurations yields the desired bound.
\end{abstract}
\end{titlepage}

\section{Introduction}

In the {\em classical weighted paging problem} there is a universe of $n$ pages, where each page \(p\) has a weight \(w_p\), and there is a cache that can hold up to \(k\) pages. At each time step, a page is requested, and if the requested page
is already in the cache then no cost is incurred, otherwise,
the page must be loaded into the cache, incurring a cost equal to its weight. The goal is to minimize the total cost incurred. 
In the unweighted version of the problem, all weights are equal (unit weights).
%We refer to this problem as {\em classical}.

Paging is one of the earliest and most extensively studied problems 
in online computation and competitive analysis ~\cite{ST85,F+,Y91,Y98,GS,BBN07,BBN08,ACN,AAK99,BBK,BFT96,I98,I97,I02}, 
including works on non-standard caching models, e.g., elastic caches~\cite{GuptaK0P19}, caching with time windows~\cite{GKP20}, caching
with dynamic weights~\cite{EvenMR18}, and caching with machine learning predictions~\cite{LykourisV18}. In fact, online paging has become a focal point for many of the recent developments in competitive analysis, e.g., the
online primal-dual method, projections, and mirror descent~\cite{BNsurvey, BCN14, BCLLM18}. 

In their seminal paper, Sleator and Tarjan \cite{ST85}
showed that any deterministic online algorithm is at least $k$-competitive and 
that the LRU policy (Least Recently Used) is exactly $k$-competitive for unweighted paging. 
The $k$-competitive bound was later generalized to weighted paging as well~\cite{CL91,Y94}.
When randomization is allowed, Fiat \etal~\cite{F+} gave the elegant randomized marking algorithm for unweighted paging, which is $\Theta(\log k)$-competitive against an oblivious adversary. Bansal \etal~\cite{BBN07} gave a $\Theta(\log k)$-competitive randomized algorithm 
for weighted paging based on the online primal-dual framework~\cite{BNsurvey}. 

We now elaborate on the two problems on which we focus in the paper.

\vspace{-0.1in}

\paragraph*{Paging with time windows.}
A natural and well-studied generalization is \emph{paging with time windows}, where each request must be served within a specified interval. Formally, a request $r$ is represented by a triple $r=(p(r),s(r),e(r))$, where $p(r)\in P$ is the requested page, $s(r)$ is the release time, and $e(r)\ge s(r)$ is the deadline for serving it. A request $r$ is satisfied if page $p(r)$ is present in the cache at some time during the interval $[s(r),e(r)]$. 

This model significantly increases the flexibility of any paging algorithm, allowing it to delay service, while batching together the service of several requests with overlapping time windows. At the same time, it introduces new challenges, as requests may overlap and interact in complex ways.

In the unweighted setting, Azar et al.~\cite{azar} showed that paging with time windows admits an \(O(k)\)-competitive deterministic algorithm and an \(O(\log k)\)-competitive randomized algorithm. The latter matches the optimal randomized guarantee for classical paging up to constant factors. At a high level, the randomized algorithm delays loading a page until the deadline of one of its pending requests, thereby batching together overlapping requests for the same page.

However, in the weighted setting, the authors of~\cite{azar} showed that this structural approach no longer holds, although the \(\Omega(k)\) lower bound for deterministic algorithms continues to hold. Gupta et al.~\cite{GKP20} subsequently introduced a novel LP formulation based on sophisticated covering constraints. They showed that paging with time windows is NP-hard and APX-hard even when $k=1$ and all pages have unit weight. Nevertheless, their LP has a constant integrality gap, yielding an $O(1)$-approximation algorithm for the problem. In the online setting, they obtained an \(O(\log n \log k)\)-competitive algorithm for weighted paging with time windows, which remains the best known bound to date.

Gupta et al.~\cite{GKP20} also considered the more general problem of \emph{paging with delays}. Here, each request $r$ is represented by a triple
\(
r=(p(r),s(r),F_r),
\)
where $p(r)$ is the requested page, $s(r)$ is the release time, and \(
F_r:\{s(r),s(r)+1,\ldots\}\to\mathbb{R}_{\ge0}
\) is a non-decreasing delay-loss function associated with $r$. If request $r$ is served at time $t\ge s(r)$, the algorithm incurs delay cost $F_r(t)$. The objective is to minimize the total weight of the loaded pages plus the total delay costs incurred by all requests.

This formulation strictly generalizes paging with time windows: a request with deadline \(d\) and release time $s$ corresponds to the delay function that is \(0\) throughout the interval \([s,d]\) and \(\infty\) afterwards. Gupta et al.~\cite{GKP20} showed how to extend their algorithmic framework from paging with time windows to paging with delays, while preserving the same asymptotic competitive guarantees. 

We focus primarily on the time-window formulation in the paper in order to present the main ideas more cleanly, though our techniques naturally extend to the more general setting of paging with delays.

\paragraph*{Joint Replenishment Problem.}

A well-studied problem that turns out to be closely related to paging with time windows is the Joint Replenishment Problem with deadlines (JRP-D). The problem can be viewed as a depth-$2$ tree (see Figure~\ref{fig:jrp-tree}), consisting of a warehouse at the root and a single leaf for each item, with a bijection between items and leaves. Requests for items arrive over time and an item can be requested multiple times.  At any time $t$, an order can be \emph{placed} by opening the warehouse and selecting a subset of the leaves with pending requests for items. A request for an item is \emph{served} at time $t$ if its corresponding leaf is included in the order. Opening the warehouse incurs a fixed cost, while serving requests for item $i$ incurs an additional item-specific cost $w_i$.

Requests for items must be served within prescribed time intervals. Equivalently, serving a request after its deadline incurs infinite cost. An algorithm may place an order at any time and use it to serve multiple pending requests simultaneously. Thus, the main trade-off is between serving requests immediately vs. delaying service to batch multiple requests together.  While the opening cost may be smaller than the item costs in general, the most interesting regime is when the opening cost is large, making it advantageous to postpone service and aggregate requests belonging to different items into a single order. For this problem, Bienkowski \etal~\cite{bienkowski2014better} gave a $2$-competitive online algorithm, matching the optimal competitive ratio. 

JRP-D is a special case of the more general Joint Replenishment Problem (JRP), in which requests incur arbitrary delay costs while awaiting service, rather than hard deadlines. JRP and its variants have been extensively studied in both offline~\cite{arkin1989computational,becchetti2009latency} and online settings. In particular, Buchbinder \etal~\cite{buchbinder2013online} showed that JRP admits a $3$-competitive online algorithm.

In this work, we study a variant of JRP-D with time-dependent warehouse opening costs. Unlike the classical setting, the algorithm must account not only for the possibility of batching requests, but also for changes in the cost of placing an order. Although time-dependent opening costs have previously been investigated experimentally~\cite{federgruen1994joint}, we examine this setting from the perspective of online competitive analysis. As we show, \pwtw admits a natural reduction to this JRP-D variant.

\begin{figure}[t]
\centering
\begin{tikzpicture}[scale=1, every node/.style={font=\small}]

\node[draw, circle] (r) at (0,1.8) {$r$};
\node[right=2pt of r] {$12$};

\node[draw, circle] (1) at (-2.4,0) {$1$};
\node[draw, circle] (2) at (-0.8,0) {$2$};
\node[draw, circle] (3) at (0.8,0) {$3$};
\node[draw, circle] (4) at (2.4,0) {$4$};

\node[right=2pt of 1] {$7$};
\node[right=2pt of 2] {$6$};
\node[right=2pt of 3] {$5$};
\node[right=2pt of 4] {$1$};

\draw (r) -- (1);
\draw (r) -- (2);
\draw (r) -- (3);
\draw (r) -- (4);

\end{tikzpicture}
\caption{A depth-$2$ tree view of JRP. The root represents the warehouse, with opening cost $12$, and each leaf represents an item together with its item cost. An order corresponds to selecting the root together with any subset of the leaves. For example, serving items $\{2,4\}$ incurs total cost $12+6+1=19$, while serving only item $4$ costs $12+1=13$.}
\label{fig:jrp-tree}

\end{figure}
Throughout the paper, when discussing randomized algorithms, all costs and competitive guarantees are in expectation, unless stated otherwise.

\subsection{Our contribution}

Our first contribution is a new formalism for representing solutions to \pwtw. We represent a solution as a pair $(C,D)$, where $C=(C_t)$ denotes the sequence of cache configurations at the end of each time step $t$, and $D=(D_t)$ denotes the sequence of page loads performed during each time step $t$. This decomposition cleanly separates the evolution of the cache over time from the scheduling of page loads used to satisfy requests that are not served by the cache configurations in $C$. In particular, an algorithm may load a page during a time step solely to satisfy pending requests and then immediately evict it. Consequently, a page may belong to $D_t$ without belonging to $C_t$. Such page loads contribute to the cost of the solution, but are not reflected in the cache configurations.

This representation is purely conceptual: it does not alter the objective function, but rather decomposes a paging solution into two components, corresponding to cache evolution and page-load scheduling. Given a request sequence $R$, let $\cost_R(C,D)$ denote the cost of the solution represented by a feasible pair $(C,D)$. We define \(cost_R(C)=\min_D \cost_R(C,D)\), where the minimum is taken over all completions $D$ of $C$ that are feasible for $R$. When the request sequence is clear from context, we omit the subscript $R$ and write $\cost(C,D)$ and $\cost(C)$.

Furthermore, we show that the online cache-configuration construction implicit in~\cite{GKP20} yields a cache trajectory $C$, i.e., a sequence of cache configurations, satisfying

\begin{equation}\label{eq:good-cache-cost}
\cost(C)=O(\log k)\cdot\OPT,    
\end{equation}
where $\OPT$ denotes the cost of an optimal offline solution to \pwtw.

Our second contribution is to isolate the \emph{completion problem} as a problem in its own right. In the completion problem, the sequence of cache configurations $C$ is fixed, but revealed online, and the goal is to compute a feasible minimum-cost completion of $C$. We show that the completion problem can be formulated as an instance of JRP with deadlines (JRP-D) with time-dependent opening costs, i.e., opening costs can vary over time.

The varying opening costs make this problem fundamentally harder than classical JRP-D, as without additional information about the future no bounded competitive ratio is possible. To get around this difficulty, we introduce a prediction-augmented model in which the algorithm receives limited information regarding future opening costs. Specifically, for each request that has been released but not yet served, which we call an \emph{active request}, the prediction identifies the cheapest opening opportunity before its deadline. In fact, we show that an even weaker prediction suffices to obtain a constant-competitive online algorithm.

\begin{theorem}
There exists a constant-competitive online algorithm for JRP-D with time-dependent opening costs under our prediction model.
\end{theorem}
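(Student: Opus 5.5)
We will design a deadline-driven algorithm in the spirit of the 2-competitive algorithm of Bienkowski \etal~\cite{bienkowski2014better} for JRP-D, adapted to time-varying opening costs through the prediction. Write $c_t$ for the opening cost at time $t$. For an active request $r$ of item $i$ with window $[s(r),e(r)]$, the prediction supplies a time $\tau(r)\in[s(r),e(r)]$ minimizing $c_t$ over the part of the window not yet passed. We may only need the weaker guarantee that $c_{\tau(r)}$ is within a constant of that minimum.

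\textbf{The algorithm.} It acts only at predicted times. When the current time $t$ equals $\tau(r)$ for some pending request $r$ (a ``critical'' request), it opens the warehouse at cost $c_t$. It serves every item with a pending request whose predicted time is $t$. In addition, it greedily serves further items by increasing urgency: each pending item $j$ whose next predicted time $\tau_j$ satisfies $c_{\tau_j}\ge c_t$ is added while the accumulated item cost stays at most $c_t$, in order of earliest deadline. This mirrors the Bienkowski et al.\ rule of piggybacking items up to the opening budget. Feasibility is immediate, since every request is served no later than its predicted time, which lies inside its window.

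\textbf{Analysis.} We charge costs to a fixed optimal solution \OPT by a dual-fitting or charging argument.
\begin{enumerate}[label=(\roman*)]
\item \emph{Item costs.} Item costs spent on piggybacking are bounded by the corresponding opening cost, so it suffices to bound total opening cost and the item costs of critical requests.
\item \emph{Critical item costs.} For an opening triggered at $\tau(r)$, \OPT serves $r$ at some order inside $r$'s window. We show that between consecutive algorithm openings serving item $i$ critically, \OPT must pay $w_i$. This holds because the second critical request of $i$ was released after the earlier opening, or else it would have been piggybacked or served. This charges critical item costs at most twice.
\item \emph{Opening costs.} Each algorithm opening at time $t=\tau(r)$ is charged to the \OPT order $o$ that serves $r$. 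Since $o$ lies in $r$'s window and $t$ was the cheapest remaining time, $c_t\le c_o$ provided $o\ge$ the time the prediction was issued. The case where $o$ is earlier, i.e.\ \OPT served $r$ before the algorithm's current view, must be handled via the release time: the prediction is made at release.
\end{enumerate}

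\textbf{Main obstacle.} The hardest step is bounding how many algorithm openings charge the same \OPT order $o$. We would argue that if two algorithm openings at $t_1<t_2$ both charge $o$, then the critical request $r_2$ at $t_2$ was pending at $t_1$. Since $c_{\tau(r_2)}\ge c_{t_1}$, the greedy step at $t_1$ would have served $r_2$ unless the budget $c_{t_1}$ was exhausted by items with earlier deadlines. Each such exhaustion certifies item cost at least $c_{t_1}$ that \OPT also pays, because those items have deadlines before $e(r_2)$ and windows that lie within the span covered by $o$'s responsibility. This yields a geometric or amortized bound. Constants come out around $O(1)$, and if the chain argument fails we would instead write the natural covering LP with time-indexed opening variables and fit duals growing with deadlines.
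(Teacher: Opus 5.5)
\textbf{There is a genuine gap: your argument uses a different prediction model, and even in that model the algorithm is not $O(1)$-competitive.}

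\emph{The model.} In the paper's model the algorithm never learns \emph{when} a cheap opening occurs. For a request $r$ of item $i$ it receives only a value $\widetilde P(r,t)$, which equals $P(r,t)=\max\{O_{t,e(r)},w_i\}$, so the item weight is folded in. This holds only when no request window is nested in $(t,e(r)]$; otherwise $\widetilde P(r,t)$ may be any value in $[0,P(r,t)]$. It is also revealed only while $r$ is the earliest-deadline active request, and later predictions are unlocked only by committing items to the current order. Your algorithm instead needs, for every pending item $j$, an argmin time $\tau_j$ together with $c_{\tau_j}$, and it triggers openings by recognizing $t=\tau(r)$. None of this is available. The paper chose its model so that the paging reduction can supply it ($\widetilde P(r,t)=w_{p(r)}$); argmin times would require knowing future cache configurations.

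\emph{Failure in your own model.} The premise of your ``main obstacle'' argument, that $r_2$ was already pending at $t_1$, is false in general. Take one item and requests $r_1,\dots,r_n$ with windows $[i,n+1]$.
(a) Let $w=0$ and $c_t=1+t\epsilon$ with $\epsilon\ll 1/n$. Each window's argmin is its release time, and at time $i$ only $r_i$ is pending. So you open $n$ times and pay about $n$, whereas a single order at time $n$ costs about $1$. Each opening is triggered by a fresh release, and all of them charge the same \OPT order.
(b) Let $w=1$ and $c_t=2^{t-n}$. The argmins are again the release times, so you pay $w$ at every step, more than $n$ in total, whereas a single order at time $n$ costs $2$. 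This also refutes your claim (ii), since \OPT pays $w$ only once.

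\emph{What the paper does instead.} Opening costs are rounded to powers of two. The algorithm opens before a deadline only if $\widetilde P(r,t)>o_t$, i.e.\ only if waiting is strictly more expensive once the item weight is counted; in (a) the rounded costs coincide, so it waits. Stage~II adds a further item only while $\widetilde P(r,t)>W(X_t)+w_i$; in (b) $\widetilde P(r,t)\le w$, so nothing is added and no order is placed before time $n$.

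The analysis then fixes an \OPT phase $[g,h)$ and splits the services. Services whose active deadline precedes $h$ are charged to \OPT's item costs and to $o_g$, using that predictions are exact inside a phase because no window fits in $(g,h)$. For the remaining services, the algorithm's opening costs form a strictly increasing sequence of powers of two below $o_h$. So these opening costs, and the item costs of those orders, sum to $O(o_h)$. This geometric-growth argument is what replaces your unproven bound on how many openings can charge one \OPT order.
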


Our next contribution is to show that the prediction required by the JRP-D algorithm is naturally available in the reduction we provide from paging with time windows. 
In fact, the prediction used by the JRP-D algorithm in this case is simply the weight of the page whose pending request has the earliest deadline. Consequently, given any sequence of cache configurations $C$ revealed online, our algorithm computes online a completion $D$ satisfying 
\[
\cost(C,D)=O(\cost(C)).
\]
Combining this completion procedure with the cache-configuration bound in~\eqref{eq:good-cache-cost} immediately yields
\[
\cost(C,D)=O(\log k)\cdot\OPT,
\]
and therefore an $O(\log k)$-competitive online algorithm for \pwtw, improving upon the previous $O(\log k\log n)$-competitive bound of Gupta et al.~\cite{GKP20}.

\begin{theorem}
There exists a randomized $O(\log k)$-competitive online algorithm for weighted paging with time windows.
\end{theorem}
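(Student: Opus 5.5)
The proof composes the two components described above: an online cache trajectory $C$ with $\cost(C)=O(\log k)\cdot\OPT$, and an online completion $D$ with $\cost(C,D)=O(\cost(C))$.

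\textbf{Cache trajectory.} We run the online fractional algorithm of Gupta et al.~\cite{GKP20} on their covering LP and round it online into a randomized sequence of cache configurations $C=(C_t)$. The cost we charge to $C$ is only the eviction/load cost of moving between configurations. Their analysis already bounds this movement cost by $O(\log k)\cdot\OPT$. The extra $\log n$ factor in their result comes only from serving the requests that $C$ does not cover, which they handle by a separate mechanism. We therefore extract from their analysis the statement $\cost(C)=O(\log k)\OPT$, i.e.\ equation~\eqref{eq:good-cache-cost}. For this we must check that there is some completion of $C$ whose cost is within a constant of the LP value restricted to the uncovered requests, so that $\min_D\cost(C,D)$ is controlled.

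\textbf{Reduction to JRP-D.} Fix $C$. Each request $r$ not satisfied by $C$ during $[s(r),e(r)]$ must be served by a transient load of $p(r)$ at some time in its window. To load a page that is not in $C_t$, we must evict a page of $C_t$, which is then reloaded at a cost equal to its weight. We model a batch of transient loads at time $t$ as an opening of the warehouse at cost roughly $\min_{q\in C_t} w_q$ (the cheapest page to temporarily evict and restore), plus item costs $w_{p(r)}$ for the requested pages. This gives a JRP-D instance with time-dependent opening costs. Its optimum is within a constant factor of $\cost(C)$ in both directions:
\begin{itemize}
\item Any completion must pay at least the item costs, plus one eviction per time step at which it loads.
\item Conversely, any JRP-D schedule can be simulated with a constant-factor overhead. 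Several missing pages loaded at once require evicting several cached pages, but their cost can be charged to the item costs by evicting pages of weight at most those loaded, e.g.\ by evicting among the $k$ cached pages and accounting via the cheapest ones.
\end{itemize}
This charging argument is the delicate step.

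\textbf{Online completion.} We run the constant-competitive JRP-D algorithm from the preceding theorem. Its required prediction is the cheapest opening opportunity before the deadline of an active request. In the paging setting, this is supplied by the weight of the page whose pending request has the earliest deadline: that page must be loaded anyway by its deadline, so its weight serves as a surrogate upper bound on the opening cost available to the algorithm. Feeding this in gives online a completion $D$ with $\cost(C,D)=O(\cost(C))$. Taking expectation over the randomness of $C$ and combining with the first step yields $\mathbb E[\cost(C,D)]=O(\log k)\OPT$.

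\textbf{Main obstacle.} I expect the hardest step to be verifying that this weak prediction suffices, i.e.\ that the JRP-D analysis only ever compares opening costs against a quantity the earliest-deadline page weight dominates. I would first try to show that the JRP-D algorithm's cost at each opening is charged either to an opening of the optimal completion or to the item cost $w_p$ of that earliest-deadline page, which the optimum also pays.
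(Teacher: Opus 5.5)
Your architecture matches the paper's: the GKP20 trajectory with $\cost_R(C)=O(\log k)\cdot\OPT$, then JRP-D with $o_t=\min_{p\in C_t}w_p$, solved by the constant-competitive algorithm fed the earliest-deadline page weight. The check you flag in the first step is supplied by GKP20's offline conversion, which keeps the same $C$ and only deletes transient loads. However, the step that makes this prediction legitimate is missing, and the reason you give does not establish it.

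The JRP-D theorem you invoke requires that, whenever no request window lies inside $(t,e(r)]$, the supplied value equal $P(r,t)=\max\{O_{t,e(r)},w_{p(r)}\}$ exactly. That is, some $\tau\in(t,e(r)]$ must have $\min_{q\in C_\tau}w_q\le w_{p(r)}$. The fact that $p(r)$ must be loaded by its deadline says nothing about this: the load can be a transient load-and-evict that never appears in $C$, and $o_\tau$ depends on $C$ alone. For instance, suppose $C$ is unchanged on $(t,e(r)]$ (so there are no forced requests there), every cached page is heavier than $p(r)$, and no other window lies in $(t,e(r)]$. Then $w_{p(r)}<P(r,t)$ in a case where the model demands equality, and exactness is precisely what the contradiction arguments in Lemma~\ref{lem:M-last} use.

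The paper closes this with the local repair Lemma~\ref{lem:deadline-cache-repair}. At no extra cost and online, the trajectory can be modified so that, for the set $Q$ of active requests with deadline $e(r)$, the page $q$ of some request in $Q$ lies in $C_{e(r)}$. Combined with breaking deadline ties by maximum page weight, this gives $O_{t,e(r)}\le o_{e(r)}\le w_q\le w_{p(r)}$, hence $P(r,t)=w_{p(r)}$. In the nested-window case, $w_{p(r)}\le P(r,t)$ is an admissible underestimate. Your fallback of re-analysing the JRP-D algorithm under weaker information is only a plan, not an argument.

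Two further steps need fixing. First, in the JRP-D-to-paging direction you assume a batch forces several cached pages out, and you try to charge them to item costs by evicting pages no heavier than those loaded. This fails. If every page of $C_t=C_{t-1}$ has weight $W$ and the order serves $m\le k$ unit-weight items, holding them simultaneously costs $mW$ in reloads against a JRP cost of $W+m$, and no cached page is light enough to charge. It is also unnecessary. Evict only $q_t\in\arg\min_{p\in C_t}w_p$, pass the ordered pages through that single slot one at a time (load, then immediately evict), and reload $q_t$. This is $D_t=\{p:i_p\in S_t\}\cup\{q_t\}$, and it gives $\cost_R(C,D)\le\cost(S)$ with no charging at all.

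Second, your JRP requests are those not satisfied by $C$ anywhere in $[s(r),e(r)]$. That set depends on future configurations, so the instance is not revealed online. The paper instead takes $R'=\{r:p(r)\notin C_{s(r)}\}$ and adds, for each $p\in E_t=C_t\setminus C_{t-1}$, a forced request with window $[t,t]$. This is online, and a page entering the cache then serves its pending requests inside the JRP instance.

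Relatedly, your lower bound of one extra eviction per loading step is false at steps with $C_t\neq C_{t-1}$, where the transition itself frees a slot. The correct bound is $o_t\le W(D_t)$, which follows from non-degeneracy ($D_t\cap C_t\neq\emptyset$) because $D_t$ includes the critical loads. This yields $\OPT(J(C))\le 2\cost_R(C)$.
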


The same framework also yields an $O(\log k)$-competitive algorithm for the more general paging-with-delays model of Gupta et al.~\cite{GKP20}.

\begin{theorem}
There exists a randomized $O(\log k)$-competitive online algorithm for weighted paging with delays.
\end{theorem}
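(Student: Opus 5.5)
The plan is to follow the same two-stage structure used for time windows: construct online a cache trajectory $C$, then complete it online to a pair $(C,D)$ via a JRP variant with time-dependent opening costs. The new ingredient is that JRP-D must be replaced by JRP with delays.

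\textbf{Step 1: cache trajectory with delays.} First I extend the cache-configuration bound \eqref{eq:good-cache-cost} to delays. Gupta et al.~\cite{GKP20} already extend their LP and fractional framework from time windows to delays. I would therefore take the fractional cache trajectory from their delay framework and round it with the same online rounding used for time windows. The target is a trajectory $C$ with $\cost(C)=O(\log k)\cdot\OPT$. Here $\cost(C)$ is the minimum, over all completions $D$, of eviction/load cost plus total delay cost. The argument should transfer essentially verbatim, because delay functions enter their covering constraints only through the charging of load times. The $O(\log k)$ loss again comes only from the multiplicative-update fractional algorithm and the rounding.

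\textbf{Step 2: completion as JRP with delays.} Given $C$ revealed online, I fix the completion problem as follows. At each step $t$, the ``warehouse opening cost'' $c_t$ is the cost of temporarily loading a page into $C_t$. Items are pages, with item cost $w_p$. A request whose page lies in some $C_t$ at time $t$ is served for free at that moment. I then design a constant-competitive online algorithm for JRP with delays and time-dependent opening costs, under the same weak prediction: the weight of the page with the earliest pending critical request. I would generalize the deadline algorithm in the style of Buchbinder et al.~\cite{buchbinder2013online}. Each item accumulates delay cost, and a request becomes ``due'' once its accumulated delay reaches its item cost $w_p$. When the total due delay reaches the predicted cheapest opening cost, the algorithm opens, serves all due items, and also serves every item whose accumulated delay is at least half its cost. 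The analysis charges each opening either to delay already paid by the algorithm or to an opening or delay of the optimal completion within the corresponding interval. The prediction guarantees that the optimum cannot open much more cheaply in that interval.

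\textbf{Step 3: combine.} Combining the two steps gives $\cost(C,D)=O(\cost(C))=O(\log k)\cdot\OPT$ in expectation over the randomness in $C$. The completion step is deterministic given $C$, so the expectations compose by linearity.

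I expect the main obstacle to be Step 2. With soft delays, there is no single deadline at which the cheapest-opening prediction is anchored, so I must define the relevant horizon for a request. I would try defining it as the time at which its delay equals $w_p$ plus the current opening cost. I would also need to verify that the paging reduction still supplies this prediction, which I would argue through the earliest-due page's weight. If this fails, a fallback is the standard reduction from delays to deadlines by geometric bucketing of delay functions, which loses only constant factors in the JRP analysis.
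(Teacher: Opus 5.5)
There is a genuine gap, and it is where you expected it: Step~2 is a research plan rather than a proof. By keeping the completion problem as a delay problem, you commit yourself to three things, none of which you establish:
\begin{enumerate}
\item a new constant-competitive algorithm for JRP with delays and time-varying opening costs;
\item a prediction model for that problem;
\item a proof that the paging reduction actually supplies those predictions.
\end{enumerate}

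Point (3) is the weakest. The paper shows that $w_{p(r)}$ is a valid prediction via Lemma~\ref{lem:deadline-cache-repair}, and that lemma is inherently deadline-based. It uses the fact that no other request is due by $e(r)$ to postpone all operations past $e(r)$, which forces a requested page into $C_{e(r)}$. With soft delays there is no such anchor time, and your proposed horizon (``delay equals $w_p$ plus the current opening cost'') comes with no analogue of this lemma. Your claim that the optimum ``cannot open much more cheaply in that interval'' is exactly what would have to be proved, since the optimum can always wait past any fixed horizon at moderate delay cost.

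Your fallback does not close the gap either. Geometric bucketing of delay functions produces time windows \emph{with penalties}, i.e., optional requests. The JRP-D completion, through the set $R'$ and the service condition, requires every request to be served. You would still need a way to decide online which requests to abandon, and a bound on their total penalty.

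The missing idea is to let Gupta et al.'s own online conversion make that decision, so no new JRP algorithm is needed. Pass to time windows with penalties and fix the online integral hitting-set solution, of value $X$. Then build an online partition of the requests:
\begin{itemize}
\item put $r$ in $R_1$ if the online conversion never serves it;
\item otherwise put $(p(r),t,e(r))$ in $R_2$, where $t$ is the first time the online conversion serves $r$.
\end{itemize}
The offline conversion keeps the same trajectory $C$ and only deletes load-and-evict steps. Its loads are therefore a subset of the online ones on the same trajectory, so it cannot serve $r$ before $t$. Two consequences follow. It serves every request of $R_2$ within its shrunken window, giving $\cost_{R_2}(C)=O(X)$. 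It also leaves every request of $R_1$ unserved and pays its penalty, giving $\sum_{r\in R_1}\pi_r=O(X)$.

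Now $(C,R_2)$ is a pure time-window completion instance, and Theorem~\ref{thm:online-completion} applies verbatim. The total cost is $O(X)=O(\log k)\cdot\OPT$. Your Steps~1 and~3 survive essentially as they are (Step~1 in its penalty form); Step~2 should be replaced by this partition.
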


\subsection{Technical overview}

\paragraph{Separating cache configurations from page loads.}

A key contribution of Gupta et al.~\cite{GKP20} is a novel LP formulation for paging with time windows based on sophisticated covering constraints (see Section~\ref{subsection:prior-work} for details). Their framework proceeds in two stages. First, they compute an online fractional solution to their LP and round it to an integral solution. This rounding incurs only an $O(1)$ loss in the offline setting, but an $O(\log k)$ loss online. The resulting integral solution, however, is not necessarily a feasible solution. In the second stage, they convert it into a feasible paging solution. This conversion incurs only an additional $O(1)$ loss offline, yet an additional $O(\log n)$ loss online, resulting in an $O(\log k \log n)$-competitive algorithm.

Although this distinction is not made explicit in~\cite{GKP20}, a closer examination of ~\cite{GKP20} reveals that both online and offline conversion algorithms produce the exact  same cache configuration at the end of every time step. The only difference lies in how they schedule the ``extra" page loads used to satisfy pending requests. This raises a natural question: if both algorithms maintain the same cache configuration at the end of every time step, how does the additional $O(\log n)$ factor arise?

The answer is that in contrast to classical paging, the cache configurations alone do not fully determine the paging solution. In classical paging, one may assume without loss of generality that the only page loaded at each time step is the requested page, which then remains in cache. Consequently, if $C_t$ denotes the cache configuration at time $t$, then the paging cost incurred at time $t$ is exactly $W(C_t\setminus C_{t-1})$, where $W(S)$ denotes the total weight of the pages in a set $S$. Thus, in the classical setting, the sequence of cache configurations completely determines the solution cost.

In \pwtw, however, this is no longer the case. Since requests may be delayed and then batched together, an algorithm may load several pages simultaneously at a single time step. Moreover, a page may be loaded only to satisfy a pending request and then evicted immediately, without ever appearing in the cache configurations. As a consequence, the quantity $W(C_t\setminus C_{t-1})$ may be strictly smaller than the actual cost incurred at time $t$, and the sequence of cache configurations by itself no longer captures all load operations performed by an algorithm.

This distinction motivates our $(C,D)$ formalism. The cache trajectory records only the pages present at the end of each time step, whereas the completion records all page loads used to serve the request sequence, including (i) pages that are loaded and immediately evicted; and (ii) page loads that are implied by the changes in the cache configurations. Recall that $\cost(C,D)$ denotes the cost of the solution represented by $(C,D)$, and that $\cost(C)$ is the minimum cost over all completions of $C$. Thus, two solutions may induce the same cache trajectory $C$ while using different completions and incurring different costs.

Although Gupta et al.~\cite{GKP20} do not explicitly introduce this decomposition, it is implicit in their analysis. Through the lens of our formalism, their online conversion algorithm takes an integral solution and constructs a pair $(C,D)$ whose cost is within an $O(\log n)$ factor of that integral solution. At the same time, the same cache trajectory $C$ does admit (offline) a completion $D'$ such that $\cost(C,D')$ is within an $O(1)$ factor of the same integral solution. Thus, the additional $O(\log n)$ factor arises entirely from the online construction of $D$, rather than the cache trajectory $C$ itself. We formalize this reinterpretation in
Section~\ref{subsection:Reinterpretation}.

This perspective suggests focusing exclusively on the completion problem. Our goal is therefore the following.
Given a sequence of cache configurations $C$ revealed online, compute online a completion $D$ such that $\cost(C,D)=O(\cost(C))$.

\paragraph{A replenishment view of the completion problem.}

To compute such a completion $D$, we reinterpret the problem as a covering problem over time. Consider a request $r$ such that $p(r)\notin C_{s(r)}$. In order for a pair $(C,D)$ to be feasible, page $p(r)$ must appear in one of the sets $D_t$ for $t\in[s(r),e(r)]$. Thus, every such request naturally induces a covering constraint over its time window.

The second key observation concerns the cost of loading a set of pages while following a fixed cache trajectory $C$. Assume throughout that the cache is full at every time step, i.e., $|C_t|=k$ for all $t$. Suppose that at time $t$ we wish to load a set $S$ of pages in order to serve active requests, where $C_t\setminus C_{t-1}\subseteq S$.

First, suppose that $C_t\neq C_{t-1}$. The transition from $C_{t-1}$ to $C_t$ already creates space for loading pages: the pages in $C_{t-1}\setminus C_t$ are evicted, while the pages in $C_t\setminus C_{t-1}$ are loaded as part of $S$. We may therefore load all pages in $S$ while realizing the prescribed cache transition, with $D_t=S$.

In contrast, suppose that $C_t=C_{t-1}$. Since the cache is full, loading a page outside the cache requires temporarily evicting a page from $C_t$. We may evict a minimum-weight page $q\in C_t$, load and immediately evict the pages in $S$ (one by one), and finally reload $q$ so that the cache configuration at the end of the time step is again $C_t$. In this case, $D_t=S\cup\{q\}$, and the additional cost of performing the load is $w_q$.

This suggests viewing the completion problem as a replenishment problem, where loading the pages in $S$ corresponds to serving a set of items. If the cache configuration changes at time $t$, the opening cost is effectively $0$, since the pages in $S$ can be loaded as part of the mandatory cache transition. If instead the cache configuration remains unchanged, performing a load may require an additional cost of \(\min\set{w_p\mid p\in C_t}\), corresponding to evicting and subsequently reloading a minimum-weight page from $C_t$. This page plays the role of the warehouse in a JRP instance. 
For simplicity, we define the opening cost at every time $t$ as
\[
o_t:=\min\set{w_p\mid p\in C_t}.
\]
The JRP formulation may therefore charge slightly more than the corresponding paging load: at times when the cache changes, the opening cost is charged separately even though the cache transition itself already provides the required loading opportunity. However, this additional cost can be charged to the pages loaded as part of the cache transition, increasing the total cost by at most a factor of $2$.

Taken together, these observations naturally define an instance $J(C)$ of JRP-D with time-dependent opening costs. Each page corresponds to an item, represented by a leaf in the JRP tree (see Figure~\ref{fig:jrp-tree}). The covering constraints induced by the requests specify the time intervals during which the requests must be served. At every time $t$, an order specifies a set of pages to load. The cost of placing an order consists of the opening cost $o_t$ together with the costs of the items served by the order.

We show that this reduction faithfully captures the completion problem. Every completion of $C$ induces a feasible solution to $J(C)$ with comparable cost, and conversely, every solution to $J(C)$ can be transformed online into a feasible completion of $C$ with no asymptotic loss. Consequently, solving $J(C)$ is, up to a constant factor, equivalent to solving the completion problem.

\paragraph{JRP-D with time-dependent opening costs.}

The next challenge is therefore to design an online algorithm for JRP-D with time-dependent opening costs. Unlike the classical JRP-D setting, where the opening cost is fixed, in our model the opening costs may vary significantly over time. Without additional information, this makes the problem substantially harder: an online algorithm may place an expensive order shortly before a much cheaper opening opportunity occurs, leading to an unbounded competitive ratio.

To get around this difficulty, we consider a prediction-augmented setting that provides information about cheaper future opening opportunities before the earliest pending deadline, allowing an algorithm to decide whether to serve a request immediately or postpone service. As in paging with time windows, release and deadline times of each JRP-D request $r$ are denoted by $s(r)$ and $e(r)$, respectively. For times $t<t'$, define $O_{t,t'}:=\min_{\tau\in(t,t']}\set{o_\tau}$. For a request $r$ for item $i$, define $P(r,t):=\max\set{O_{t,e(r)},w_i}$. Thus, $P(r,t)$ may overestimate the minimum opening cost $O_{t,e(r)}$ when $w_i>O_{t,e(r)}$.

The prediction value of a request $r$ at time $t$, denoted by $\widetilde P(r,t)$, is defined as follows. If there exists a request $r'$ whose time window satisfies $[s(r'),e(r')]\subseteq(t,e(r)]$, then $\widetilde P(r,t)$ may take any value in $[0,P(r,t)]$; otherwise, $\widetilde P(r,t)=P(r,t)$. The algorithm has access to $\widetilde P(r,t)$ only when $r$ has the earliest deadline among the requests that are active at that point. Thus, as requests are served, predictions for later-deadline active requests become available. Using these predictions, the algorithm decides whether a current earliest-deadline request should be served immediately or deferred in anticipation of a cheaper future opening opportunity.

Importantly, the required predictions are naturally available in the paging reduction. Consider an active page request $r$ for page $p$. We show that, if no request that has not yet arrived has a deadline earlier than $e(r)$, then one may assume\footnote{For intuition, suppose $p$ is the only page with deadline $e(r)$. Any feasible solution may postpone loading $p$ until time $e(r)$. Furthermore, if $p$ is evicted at time $e(r)$ in order to load another page $q$, then the same load can instead be performed at time $e(r)+1$. This preserves feasibility, since no other request has deadline $e(r)$, and incurs the same cost. Hence, there exists an optimal solution in which $p$ belongs to the cache at time $e(r)$. The general case, without the uniqueness assumption, is addressed in Lemma~\ref{lem:deadline-cache-repair}.} that $p\in C_{e(r)}$. Therefore, $O_{t,e(r)}\le o_{e(r)}=\min_{q\in C_{e(r)}}w_q\le w_p$, and hence $P(r,t)=\max\set{O_{t,e(r)},w_p}=w_p$. Thus, whenever the prediction is required to be exact, the paging reduction naturally provides it: the prediction value is simply the weight of the requested page.

When a request that has not yet arrived has an earlier deadline, the exact value $P(r,t)$ may depend on information that is not yet available at $t$. In this case, our prediction model permits an arbitrary underestimation of $P(r,t)$. In particular, the reduction may provide the value $w_p$, which is valid since $w_p\le P(r,t)$. Thus, even when the exact prediction value cannot yet be determined, the reduction provides a valid prediction to the JRP algorithm.

Using this prediction model, we design a constant-competitive online algorithm for JRP-D with time-dependent opening costs. Combining the reduction with our JRP algorithm yields an online procedure that, given a sequence of cache configurations $C$, computes a completion $D$ satisfying $\cost(C,D)=O(\cost(C))$.

Combining this completion procedure with the cache-configuration bound in~\eqref{eq:good-cache-cost} yields
\[
\cost(C,D)=O(\log k)\cdot\OPT,
\]
and therefore an $O(\log k)$-competitive online algorithm for paging with time windows.

Our techniques also extend to the more general \emph{paging-with-delays} model introduced by Gupta et al.~\cite{GKP20}. They first show that paging with delays is equivalent to a variant of paging with time windows in which requests can remain unserved at the cost of paying a request-specific penalty. Then they extend their analysis of paging with time windows to this penalty model. We show that our analysis admits the same extension, yielding an $O(\log k)$-competitive algorithm for paging with a general delay function as well.

\paragraph*{Organization.}
The remainder of the paper is organized as follows. In Section~\ref{sec:preliminaries}, we define the problems we consider, introduce our approach, and expand on previous work. In Section~\ref{sec:completion-reduction}, we define the reduction from \pwtw to JRP-D with time-dependent opening costs. In Section~\ref{sec:jrp-algorithm}, we present and analyze our constant-competitive algorithm for JRP-D with time-dependent opening costs. Finally, Appendix~\ref{app:delays} extends our framework to paging with delays.

\section{Our Model and Approach}\label{sec:preliminaries}
\subsection{Paging with Time Windows}\label{subsec:new-represntation}

In the \pwtw problem, there is a universe of $n$ pages, denoted by $P=\{p_1,p_2,\dots,p_n\}$. Each page $p\in P$ is associated with a positive weight $w_p$, and there is a cache of size $k$. The system evolves over discrete time steps $t=1,2,\dots,T$. At each time $t$, the cache stores a subset of pages $C_t\subseteq P$, referred to as the \emph{cache configuration} at time $t$. Without loss of generality, we assume that $|C_t|=k$ at every time $t$, since there is no benefit to evicting a page unless another page is loaded in its place.

\paragraph{Requests.}
Requests arrive over time. Each request $r = (p(r), s(r), e(r))$
%\[
%r = (p(r), s(r), e(r))
%\]
specifies a page $p(r)$, a release time $s(r)$, and a deadline $e(r)$, defining a time window $[s(r), e(r)]$. In the online setting, a request $r$ becomes known only at time $s(r)$. A request $r$ is served if the page $p(r)$ is present in the cache at some time $t \in [s(r), e(r)]$. Note that pages can be loaded and evicted at the same time step, thus allowing for multiple requests to be served simultaneously.

\paragraph{Cost.}
The cost of an algorithm is the total cost of all page loads over the time horizon. Loading a page $p$ into the cache incurs cost $w_p$, while evictions are free. Some prior work instead charges for page evictions; the two cost measures are equivalent up to an additive term determined by the initial and final cache configurations. The objective is to minimize the total cost while serving all requests.

The classical paging problem corresponds to the special case in which every request satisfies $s(r) = e(r)$.

\subsection{Prior Work: Gupta et al.~\cite{GKP20}}\label{subsection:prior-work}

The main difficulty in extending classical paging to the time-window setting arises from limitations of standard linear programming (LP) relaxations. In classical weighted paging, a request for page $p$ at time $t$ requires the page to reside in cache at time $t$. In contrast, in paging with time windows, a request $r$ only requires that $p(r)$ appear in the cache at some time during the interval $[s(r), e(r)]$.

A natural LP relaxation would therefore require that the total fractional residency of page $p(r)$ over the interval $[s(r), e(r)]$ be at least $1$. However, Gupta et al.~\cite{GKP20} showed that this formulation suffers from an integrality gap of $\Omega(k)$. Intuitively, such a fractional solution can satisfy a request by keeping a small fraction of the page in cache throughout the entire window, while an integral solution must eventually pay the full cost $w_p$.

To overcome this issue, Gupta et al.~\cite{GKP20} introduced a fundamentally different LP formulation based on sophisticated covering constraints, which they call the hitting-set LP. For each page $p$ and time $t$, the variable $x_{p,t} \in [0,1]$ represents the extent to which page $p$ is evicted at time $t$.

At a high level, the LP guarantees cache capacity through covering constraints over time intervals. Informally, if an interval $I$ contains requests to $k+1$ distinct pages, then the cache cannot simultaneously retain all of them throughout $I$, and therefore at least one of these pages must be evicted during the interval. The LP utilizes this observation by requiring that every such pair $(I,S)$, where $S$ is a set of $k+1$ pages requested during $I$, be ``hit'' by an eviction within the time interval of one of the pages in $S$.

In the standard LP formulations for paging, the LP directly specifies which pages reside in cache at each time step. In contrast, the hitting-set LP only determines when pages are evicted, and does not explicitly specify how requests are served. Consequently, an integral solution to the hitting-set LP does not immediately define a feasible paging schedule. Solving \pwtw therefore requires addressing two separate challenges: first, rounding the fractional LP solution online, and second, transforming the resulting eviction schedule into a feasible paging solution.

Gupta et al.~\cite{GKP20} showed that the hitting-set LP admits an online rounding procedure with an \(O(\log k)\) loss and an offline rounding procedure with only an \(O(1)\) loss, both relative to the optimal fractional solution. Their approach exploits the interval structure of the constraints and refines the primal-dual techniques developed by Bansal et al.~\cite{BNsurvey} for the classical weighted paging LP.

To bridge the gap between the rounded eviction schedule and a feasible paging solution, Gupta et al.~\cite{GKP20} introduced an additional scheduling layer. At each time step $t$, the algorithm determines whether a page must be loaded into the cache. Before updating the cache configuration, the algorithm may also perform additional ``load-and-evict'' operations: pages are loaded in order to satisfy requests and are evicted within the same time step, thus not appearing in \(C_t\).

Implementing this scheduling layer is precisely what separates the online and offline guarantees. The offline algorithm uses future knowledge of the cache trajectory to coordinate these ``load-and-evict'' operations more efficiently, achieving only an $O(1)$ loss relative to the rounded hitting-set LP solution. In contrast, the online algorithm incurs an additional $O(\log n)$ factor, resulting in an overall \(O(\log k \log n)\)-competitive ratio.

This gap is the starting point of our work. It suggests that the remaining difficulty in obtaining an optimal online algorithm lies not in constructing the cache trajectory itself, but in scheduling the additional page loads needed to satisfy requests.

\subsection{Our Approach}\label{subsec:our-approach}

The discussion above suggests that paging with time windows can be naturally decomposed into two tasks: determining how the cache evolves over time, and then deciding which additional pages to load in order to serve unsatisfied requests. While these two tasks are intertwined in the online algorithm of Gupta et al.~\cite{GKP20}, their offline algorithm suggests that they can be partially decoupled: after the cache configurations are determined, a final scheduling step is used to complete the solution by specifying the additional page loads.

Our approach is to formalize this separation. Intuitively, the cache trajectory captures the structural decisions of the algorithm, while the loading schedule determines how requests are served. The cache trajectory is not known in advance, but is revealed online one configuration at a time. Given this trajectory, the remaining task is to construct a loading schedule online that satisfies all requests at minimum cost.

At first glance, this is an interval-covering problem: each request induces a time interval during which its page must be loaded, and loading a page can simultaneously cover multiple overlapping intervals. However, the cache trajectory makes loading pages at certain times inherently more expensive than at other times, as a load may require an additional page to be evicted and then reloaded in order to preserve the given cache configuration. This creates a batching trade-off analogous to that in JRP-D, where serving a set of items requires paying an opening cost (of the warehouse) in addition to the item costs.

\paragraph{Solutions.}
We represent a solution using two components. Let $C=(C_t)_{t=0}^T$ denote a sequence of cache configurations, where $C_t\subseteq P$ is the set of pages in the cache at the end of time $t$. Let $D=(D_t)_{t=1}^T$ denote a sequence of page loads, where $D_t\subseteq P$ is the set of pages loaded at time $t$. A page belongs to $D_t$ whenever it is loaded at time $t$, i.e., if it is reloaded at the same time step that it is evicted, or if it is immediately evicted after it is loaded.

\paragraph{Feasibility.}
A pair $(C,D)$ is a feasible solution if it satisfies the following conditions:

\begin{itemize}
    \item \emph{Capacity.} For every time $t$, $|C_t| = k$.
    
    \item \emph{Consistency.} For every time $t$, $C_t \setminus C_{t-1} \subseteq D_t$. This implies that, for any page $p$, if there exist times $t_1 < t_2$ such that $p \notin C_{t_1}$ but $p \in C_{t_2}$, then there exists a time $t' \in (t_1, t_2]$ such that $p \in D_{t'}$.

  \item \emph{Non-degeneracy.} For every time $t$, if $D_t \neq \emptyset$, then $D_t \cap C_t \neq \emptyset$. That is, whenever pages are loaded at time $t$, at least one of the loaded pages remains in cache at the end of the time step. For example, if $C_t=C_{t-1}$ and a page $p\in C_t$ is temporarily evicted to create space and subsequently reloaded, then $p\in D_t$.

    \item \emph{Service.} For every request $r$, the requested page $p(r)$ must appear in the cache at some time during the interval $[s(r), e(r)]$. Equivalently, a request $r$ is served if either $p(r)$ is already present in the cache at the beginning of its window, or it is loaded at some time during the window. Formally, $p(r) \in 
C_{s(r)}
\;\cup
\left(\bigcup_{t=s(r)}^{e(r)} D_t\right)
$.
\end{itemize}

\paragraph{Cost.}
The cost of a feasible pair $(C,D)$ with respect to a request sequence $R$ is
\(
\cost_R(C,D)=\sum_{t=1}^T W(D_t).
\) For an infeasible pair $(C,D)$, we define $\cost_R(C,D)=\infty$.

Observe that by the consistency condition, every page that is not in $C_{t-1}$, but belongs to $C_t$, must be loaded at time $t$ and hence belongs to $D_t$. Consequently, any feasible pair $(C,D)$ satisfies
\(
W(C_t\setminus C_{t-1})
\le
W(D_t).
\)
The left-hand side is precisely the cost incurred at time $t$ in classical weighted paging, where only pages that remain in cache are loaded. Equality need not hold in \pwtw, since pages may be loaded solely to satisfy requests and evicted immediately afterwards.

For a fixed sequence of cache configurations $C$, we define $\cost_R(C)
=
\min_D \cost_R(C,D)$.

\paragraph{Critical pages.}
For every time $t$, define $E_t := C_t \setminus C_{t-1}$ as the set of pages that enter the cache at time $t$. By the consistency condition, every such page must be loaded at time $t$, and therefore $E_t \subseteq D_t \cap C_t$. We refer to $E_t$ as the set of \emph{critical pages} at time $t$.

\subsection{A Reinterpretation of Gupta et al. \cite{GKP20}}\label{subsection:Reinterpretation}

The separation between cache configurations and page loads allows us to reinterpret the guarantees of Gupta et al.~\cite{GKP20} in terms of the quantity $\cost_R(C)$. Consider any integral solution to their hitting-set LP whose objective value is $M$, and let $C$ be the cache trajectory obtained online from this solution. They provide an online conversion procedure that builds on the cache trajectory and yields a feasible solution. (To this end, recall that it is allowed to load a page $p$ and then immediately evict it, so that $p$ does not appear in the cache configuration at the end of the time step.) They also provide an offline conversion procedure that optimizes the online solution by considering for each page $p$ the times at which such load-and-evict operations occur and removing some of these loads, while preserving feasibility. Since only loads of pages that do not remain in cache are removed, the cache trajectory $C$ remains unchanged.

The resulting offline completion $D$ computed by \cite{GKP20} is feasible and satisfies $\cost_R(C,D)=O(M)$.
Since $\cost_R(C)$ is defined as the minimum cost of any completion of $C$, it follows that $\cost_R(C)\le \cost_R(C,D)=O(M)$.

Thus, every integral LP solution naturally induces a cache trajectory whose completion cost is within a constant factor of the LP objective value.

Gupta et al.~\cite{GKP20} further showed how to construct online an integral solution whose objective value satisfies $M = O(\log k)\cdot \OPT$, where $\OPT$ denotes the cost of an optimal offline solution to \pwtw. We combine the two observations and get the following.

\begin{lemma}\label{lem:good-cache-trajectory}
There exists an online algorithm that constructs a cache trajectory $C$ satisfying $\cost_R(C)=O(\log k)\cdot\OPT$.
\end{lemma}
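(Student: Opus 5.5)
The plan is to compose the two guarantees of Gupta et al.~\cite{GKP20} discussed in Section~\ref{subsection:Reinterpretation}. We take their online algorithm, which maintains a fractional solution to the hitting-set LP and rounds it online. This produces an integral solution of objective value $M=O(\log k)\cdot\OPT$: the fractional optimum is at most $\OPT$ because the LP relaxes the problem, and the online primal-dual step together with the rounding loses only $O(\log k)$ overall. Our algorithm then feeds this integral eviction schedule, as it is produced, into the online conversion procedure of \cite{GKP20}. We output only the cache configurations $C_t$ that this procedure maintains at the end of each step, and discard its load-and-evict operations. Since the conversion is online and $C_t$ depends only on the eviction schedule up to time $t$, the trajectory $C$ is constructed online.

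For the cost bound we invoke the offline conversion of \cite{GKP20} applied to the same integral solution. The key point is that the offline procedure starts from the online conversion's output. It only deletes loads of pages that are loaded and then evicted within the same step, that is, pages in $D_t\setminus C_t$. It never changes which pages are present at the end of a step. We will check the following four conditions for the resulting offline completion $D$ paired with the unchanged trajectory $C$:
\begin{itemize}
\item \emph{Capacity and consistency} hold because $C$ is untouched, and every page in $C_t\setminus C_{t-1}$ is still loaded at time $t$.
\item \emph{Non-degeneracy} holds because any nonempty $D_t$ still contains a page that stays in the cache, either a critical page or the reloaded page that was temporarily evicted. We may have to add back such a page of cost at most the removed loads, if the original procedure left none.
\item \emph{Service} holds by the feasibility guarantee of \cite{GKP20}.
\item \emph{Cost}: \cite{GKP20} show the offline solution costs $O(M)$, so $\cost_R(C,D)=O(M)$.
\end{itemize}

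By the definition of $\cost_R(C)$ as a minimum over completions, we then get $\cost_R(C)\le \cost_R(C,D)=O(M)=O(\log k)\cdot\OPT$.

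The main obstacle is verifying that the offline procedure of \cite{GKP20} really preserves the end-of-step cache contents exactly. Their paper does not phrase its algorithms in the $(C,D)$ language. The risk is that some offline modification, such as replacing a load-and-evict with keeping the page slightly longer, alters some $C_t$. I would first re-read their offline argument and confirm that it only removes redundant transient loads; if it does more, I would recast each such modification as a transient load at the same time step, charged to the same budget, so that $C$ is left intact at constant extra cost.
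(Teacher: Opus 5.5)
Your proposal is correct and follows the paper's own argument: compose the online $O(\log k)$ rounding of the hitting-set LP of \cite{GKP20} with their online conversion to obtain $C$. Then observe that their offline conversion only deletes load-and-evict operations (pages in $D_t\setminus C_t$), so it leaves $C$ unchanged and yields a completion of cost $O(M)$, giving $\cost_R(C)\le O(M)=O(\log k)\cdot\OPT$. Your non-degeneracy caveat about adding back a page is unnecessary: deleting only pages outside $C_t$ never removes a page of $D_t\cap C_t$, so any load set that was non-degenerate before the deletions stays non-degenerate.
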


Lemma~\ref{lem:good-cache-trajectory} shows that an $O(\log k)$-competitive algorithm for \pwtw can be obtained by completing the cache trajectory $C$ produced online by Gupta et al. ~\cite{GKP20}. Thus, it remains to solve the completion problem online: at each time $t$, the algorithm observes $C_t$ and computes a completion $D$, satisfying $\cost_R(C,D)=O(\cost_R(C))$.

\subsection{Joint Replenishment Problem (JRP)}

To formulate the completion problem as a replenishment problem, we introduce a variant of JRP that will be central to our reduction. Joint replenishment is defined over a discrete time horizon $t=1,2,\dots,T$ and a set of items $Y$, where each item $i\in Y$ has cost $w_i$. In our setting, items correspond to pages.

Each request $r$ is specified by a triple \((i(r),s(r),h_r)\),
where $i(r)\in Y$ is the requested item, $s(r)$ is the release time, and $h_r(t)$ is the delay cost incurred if $r$ is served at time $t\ge s(r)$.

A solution consists of choosing, for each time $t$, a set of items $B_t\subseteq Y$ to be served at time $t$. If $B_t\neq\emptyset$, we say that an order is placed at time $t$. A request $r$ is served at the earliest time $t\ge s(r)$ such that $i(r)\in B_t$. The cost of a solution has three components:

\begin{itemize}
    \item \emph{Opening cost.} For each time $t$ with $B_t\neq\emptyset$, we pay a cost $o$ for opening the warehouse.
    \item \emph{Item cost.} For each time $t$ and each item $i\in B_t$, we pay $w_i$.
    \item \emph{Delay cost.} For each request $r$ served at time $t$, we pay $h_r(t)$.
\end{itemize}

The objective is to minimize the total cost.

\paragraph{JRP with deadlines.}
A special case of JRP is obtained when each request $r$ is specified by a triple $(i(r),s(r),e(r))$, where $e(r)$ is its deadline, and must be served within the interval $[s(r),e(r)]$. This corresponds to the delay-cost function
\[
h_r(t)=
\begin{cases}
0 & t\in[s(r),e(r)],\\
+\infty & \text{otherwise}.
\end{cases}
\]
We refer to this variant as JRP with deadlines (JRP-D).

\paragraph{JRP-D with time-dependent opening costs.}\label{definition-variant}

We consider a variant of JRP-D in which the fixed opening cost $o$ is replaced by a time-dependent opening cost $o_t$. This additional flexibility makes the problem substantially harder in the online setting. Consider an instance with a single item $i$ of cost $w_i=0$ and a single request $r=(i,1,2)$. Let the opening cost at time $1$ be $o_1=M$. At time $2$, the opening cost is either $o_2=1$ or $o_2=M^2$, and its value is not known to the online algorithm at time $1$.

If the algorithm waits until time $2$, then when $o_2=M^2$ it pays $M^2$, while the optimum pays $M$ by serving $r$ at time $1$. Thus, any bounded-competitive algorithm must serve $r$ at time $1$. However, when $o_2=1$, the algorithm then pays $M$, while the optimum waits until time $2$ and pays $1$. Hence, the competitive ratio is at least $M$, which is unbounded. Therefore, no online algorithm can achieve a bounded competitive ratio without additional information on future opening costs.

To get around this difficulty, we consider a prediction-augmented setting. For times $t<t'$, define $O_{t,t'}:=\min_{\tau\in(t,t']}\set{o_\tau}$, the minimum opening cost during the interval $(t,t']$. For a request $r$ to item $i$, define $P(r,t):=\max\set{O_{t,e(r)},w_{i}}$. Thus, $P(r,t)$ may overestimate the minimum opening cost $O_{t,e(r)}$ in cases where $w_i>O_{t,e(r)}$. 

The prediction value of a request $r$ at time $t$, denoted by $\widetilde P(r,t)$, is defined as follows. If there exists a request $r'$ whose time window satisfies $[s(r'),e(r')]\subseteq(t,e(r)]$, then $\widetilde P(r,t)$ may be any value in $[0,P(r,t)]$; otherwise, $\widetilde P(r,t)=P(r,t)$.

We next specify which prediction values are available to the algorithm. A request is \emph{active} at time $t$ if it has arrived, but has not yet been served, and an item is \emph{active} if it has an active request. During the construction of an order, requests for items already committed to that order are treated as inactive for subsequent prediction queries. All committed items must be served by the order placed at the same time step.

The algorithm has access to $\widetilde P(r,t)$ only when $r$ has the earliest deadline among the requests active at that point. Thus, committing items to an order may make predictions for other active requests available.

\section{A Reduction for Completing Cache Configurations}\label{sec:completion-reduction}
We study here the completion problem: given a sequence of cache configurations $C$, what is the minimum cost of completing it into a feasible paging solution? We show that this problem is captured, up to a constant factor, by a suitable instance $J(C)$ of the JRP-D variant introduced in Section~\ref{definition-variant}.
We prove that the reduction preserves the completion problem up to a constant factor. First, any completion of $C$ induces a feasible solution to $J(C)$ with comparable cost. Second, any solution to $J(C)$ can be transformed online into a feasible completion of $C$ with no asymptotic loss. Consequently, $J(C)$ characterizes $\cost(C)$ up to a constant factor.

\subsection[]{Defining the Instance $J(C)$}\label{sec:from-paging-to-jrp}

Fix a sequence $R$ of page requests and a cache trajectory $C$. We construct an instance $J_R(C)$ of JRP-D with time-dependent opening costs. Since $R$ is fixed throughout the discussion, we omit it from the notation and write $J(C)$ and $\cost(C)$. To distinguish between the two problems, we refer to requests in $R$ as \emph{page requests} and to requests in $J(C)$ as \emph{JRP requests}.

The instance $J(C)$ has three ingredients: items, JRP requests, and time-dependent opening costs. The items correspond to pages, the JRP requests encode the page loads required for feasibility, and the opening cost at time $t$ captures the additional cost of performing a load while following the cache trajectory $C$.

\paragraph{Items.}
For each page $p\in P$, there is a corresponding item $i_p\in Y$ with item cost $w_{i_p}=w_p$.

\paragraph{Opening costs.}

By the non-degeneracy condition (see~\ref{subsec:our-approach}), every nonempty load $D_t$ must contain a page from $C_t$. Thus, any load performed at time $t$ incurs cost at least
$\min\set{w_p | p\in C_t}$.

Motivated by this observation, we define the opening cost at time $t$ by
\[
o_t := \min\set{ w_p|p \in C_t }.
\]
Observe that this may overestimate the actual cost incurred by a load. For example, if at time $t$ a completion loads a single page $p$ that remains in cache, that is $D_t=\set{p}$, then the paging cost incurred at time $t$ is $w_p$, whereas the corresponding order in $J(C)$ incurs cost
\(
w_p+o_t
=
w_p+\min\set{w_q\mid q\in C_t}
\le 2w_p.
\)
Thus, the reduction can increase the cost by at most a factor of two, which is reflected in our analysis.

\paragraph{Prediction availability.}
Suppose that at time $t<e(r)$, $r$ is the active
request with the earliest deadline, breaking ties
by maximum page weight. Set $\widetilde P(r,t):=w_{p(r)}$. If there exists a request $r'$ with $[s(r'),e(r')]\subseteq(t,e(r)]$, this prediction is valid because our prediction model permits underestimation and $w_{p(r)}\le P(r,t)$.

Otherwise, let $Q$ be the set of remaining active requests with deadline $e(r)$. No remaining active request outside $Q$, and no unrevealed request, has deadline at most $e(r)$. By Lemma~\ref{lem:deadline-cache-repair}, without loss of generality we may assume that some request in $Q$ has its page $q$ in $C_{e(r)}$. By the tie-breaking rule,
$w_q\le w_{p(r)}$, and hence
\[
O_{t,e(r)}
\le o_{e(r)}
\le w_q
\le w_{p(r)}.
\]
Therefore, $\widetilde P(r,t)=w_{p(r)}=P(r,t)$,
as required by the prediction model in this case.

\paragraph{JRP requests.}
The JRP requests in $J(C)$ encode two requirements: serving page requests and performing page loads forced by the cache trajectory. Let
\[
R':=\{r\in R:p(r)\notin C_{s(r)}\}.
\]
These are exactly the page requests that are not served by the cache at their release time and must therefore be satisfied by a subsequent page load.

For every page request $r\in R'$, we introduce a JRP request for item $i_{p(r)}$ with release time $s(r)$ and deadline $e(r)$. In addition, for every time $t$ and every page $p\in E_t$, we introduce a JRP request for item $i_p$ with release time and deadline both equal to $t$.

The first type of JRP request ensures that every page request in $R'$ is served within its time window. The second type enforces consistency with the prescribed cache trajectory: if page $p$ enters the cache at time $t$, then every feasible completion must load $p$ at time $t$.

Finally, the reduction must preserve the online nature of the problem: at time $t$, the algorithm must be able to determine $o_t$ and all JRP requests arriving at time $t$ using only information available up to time $t$.

\begin{theorem}\label{thm:online-instance}
Given a cache trajectory $C$ revealed online, the instance $J(C)$ can be constructed online. In particular, at each time $t$, the opening cost $o_t$ and all JRP requests arriving at time $t$ depend only on the history up to time $t$.
\end{theorem}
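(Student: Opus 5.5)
The plan is to verify each ingredient of $J(C)$ in turn, showing that everything the instance specifies at time $t$ depends only on $C_0,\dots,C_t$ and on the page requests with release time at most $t$. Both of these are available to an online algorithm at time $t$: the trajectory $C$ is revealed one configuration per step, and a page request $r$ becomes known at time $s(r)$. The items and their costs $w_{i_p}=w_p$ are fixed in advance, so nothing needs to be checked there.

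\textbf{Opening costs.} By definition, $o_t=\min\set{w_p\mid p\in C_t}$. This depends only on $C_t$ and the fixed page weights, so it is computable once $C_t$ is revealed.

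\textbf{First type of JRP requests.} A JRP request arising from a page request $r$ is released at time $s(r)$. Whether it exists is decided by the test $r\in R'$, i.e.\ $p(r)\notin C_{s(r)}$. At time $s(r)$ the request $r$ has just been revealed, with page $p(r)$ and deadline $e(r)$, and $C_{s(r)}$ is known. So the decision, the item $i_{p(r)}$ and the window $[s(r),e(r)]$ are all determined at time $s(r)$. Note that the deadline is part of the revealed triple, so it uses no future information.

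\textbf{Second type of JRP requests.} These are released at time $t$, one for each $p\in E_t=C_t\setminus C_{t-1}$. This set depends only on $C_{t-1}$ and $C_t$, both known at time $t$, and each such request has window $[t,t]$.

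Taken together, these observations show that every JRP request arriving at time $t$, and the value $o_t$, are functions of the history up to time $t$.

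The main point needing care is the convention at the start of a step: the definition of $R'$ tests membership in $C_{s(r)}$, the configuration at the \emph{end} of step $s(r)$. I would therefore argue that the online model reveals $C_t$ at time $t$ before the JRP algorithm acts at that step, so that the order of events within a step is consistent. This convention is already implicit in how $o_t$ is defined.

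I would also add a remark that the prediction values are not part of the instance itself. They are supplied separately by the rule $\widetilde P(r,t)=w_{p(r)}$, whose validity was argued above via Lemma~\ref{lem:deadline-cache-repair}, and that rule is itself computable online.
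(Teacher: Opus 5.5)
Your proof is correct, and it is the argument the paper relies on. The paper states this theorem without a separate proof, treating it as immediate from the definitions $o_t=\min\{w_p\mid p\in C_t\}$, $R'=\{r: p(r)\notin C_{s(r)}\}$ and $E_t=C_t\setminus C_{t-1}$; your ingredient-by-ingredient check, including the convention that $C_t$ is observed at time $t$ before the JRP algorithm acts, is exactly that verification.
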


\subsection[]{Upper Bounding $\OPT(J(C))$}\label{sec:paging-bound-jrp}

We show that any feasible paging solution induces a feasible solution to $J(C)$ with comparable cost. In particular, given any sequence of page loads $D$ that completes $C$, we can construct a solution to $J(C)$ whose cost is within a constant factor of $\cost_R(C,D)$. As a consequence, $\OPT(J(C))\le O(\cost_R(C))$.

Fix any loading sequence $D$ such that $(C,D)$ is feasible for $R$. Define the associated JRP solution $S^\star=(S^\star_t)_{t=1}^T$ by
\[
S^\star_t:=\{i_p:p\in D_t\}
\qquad\text{for every }t.
\]
Thus, the order $S^\star_t$ serves precisely the items corresponding to the pages loaded at time $t$.

\paragraph{Feasibility.}
We claim that $S^\star$ is feasible for $J(C)$. Consider a page request $r\in R'$ for page $p=p(r)$. Since $(C,D)$ is feasible for $R$, there exists a time $t\in[s(r),e(r)]$ such that $p\in D_t$. Hence $i_p\in S^\star_t$, and the corresponding JRP request is served within its time window.

For every time $t$ and page $p\in E_t=C_t\setminus C_{t-1}$, we have $p\in D_t$, and thus $i_p\in S^\star_t$. Hence, the corresponding JRP request, whose release time and deadline are both $t$, is also served. Thus, $S^\star$ is feasible for $J(C)$.

\paragraph{Cost analysis.}
The item-cost part satisfies
\[
\sum_t\sum_{i_p\in S^\star_t}w_{i_p}
=
\sum_t\sum_{p\in D_t}w_p
=
\cost_R(C,D).
\]
It remains to bound the opening-cost part. Fix a time $t$ such that $D_t\neq\emptyset$. By the non-degeneracy condition, $D_t\cap C_t\neq\emptyset$. Therefore,
\[
o_t
=
\min_{p\in C_t}w_p
\le
\min_{p\in C_t\cap D_t}w_p
\le
\sum_{p\in D_t}w_p.
\]
Summing over all times $t$ with $D_t\neq\emptyset$, we obtain
\[
\sum_{t:\,D_t\neq\emptyset}o_t
\le
\sum_t\sum_{p\in D_t}w_p
=
\cost_R(C,D).
\]

\paragraph{Conclusion.}
We conclude that for every feasible $(C,D)$,
\[
\OPT(J(C))
\le
\cost(S^\star)
\le
2\cdot\cost_R(C,D).
\]
Taking the minimum over all feasible loading sequences $D$, we obtain
\[
\OPT(J(C))
\le
2\cdot\cost_R(C).
\]

\subsection{From JRP-D Solutions to Paging Solutions}\label{sec:from-jrp-to-paging}

We now show how to complete a cache trajectory $C$ into a feasible paging solution using a solution to the instance $J(C)$.

Let $S=(S_t)_{t=1}^T$ be a feasible solution to $J(C)$, where $S_t\subseteq Y$ is the order placed at time $t$. Our goal is to construct a sequence of page loads $D$ such that $(C,D)$ is feasible and has cost comparable to that of $S$.

\paragraph{Construction.}
We define the loading sequence $D$ online as follows. For every time $t$, let $q_t$ be a minimum-weight page in $C_t$, i.e., $q_t\in\arg\min_{p\in C_t}w_p$, and define
\[
D_t :=
\begin{cases}
\emptyset, & \text{if } S_t=\emptyset,\\
\{p:i_p\in S_t\}\cup\{q_t\}, & \text{otherwise}.
\end{cases}
\]

Intuitively, when $S_t\neq\emptyset$, we load the pages corresponding to the items in $S_t$. If $C_t=C_{t-1}$, we temporarily evict $q_t$ to create space in the cache, and then load and immediately evict  all other pages corresponding to items in $S_t$ (one by one). We then reload $q_t$ to restore the cache configuration $C_t$. When $C_t\neq C_{t-1}$, evicting and reloading $q_t$ may be unnecessary. For simplicity, we include $q_t$ in $D_t$ and charge its cost to the opening cost $o_t$ of the corresponding JRP order.

\paragraph{Feasibility.}
We show that $(C,D)$ is a feasible paging solution.

\emph{Capacity.} The capacity constraint depends only on $C$ and is therefore satisfied.

\emph{Service.} Let $r$ be a page request. If $p(r)\in C_{s(r)}$, then $r$ is already served. Otherwise, $J(C)$ contains a JRP request for item $i_{p(r)}$ with release time $s(r)$ and deadline $e(r)$. Since $S$ is feasible, there exists a time $t\in[s(r),e(r)]$ such that $i_{p(r)}\in S_t$. Hence $p(r)\in D_t$, and $r$ is served.

\emph{Consistency.} Fix a time $t$. If $C_t=C_{t-1}$, the consistency condition is immediate. Otherwise, for every page $p\in E_t$, the instance $J(C)$ contains a JRP request for item $i_p$ with release time and deadline both equal to $t$. Since $S$ is feasible, $i_p\in S_t$, and hence $p\in D_t$. Thus, $C_t\setminus C_{t-1}\subseteq D_t$.

\emph{Non-degeneracy.} Fix a time $t$ such that $D_t\neq\emptyset$. By construction, $q_t\in D_t\cap C_t$, and therefore $D_t\cap C_t\neq\emptyset$.

\paragraph{Cost analysis.}
If $S_t=\emptyset$, then $D_t=\emptyset$ and no cost is incurred. Otherwise,
\[
\sum_{p\in D_t}w_p
\le
\sum_{i_p\in S_t}w_{i_p}+w_{q_t}.
\]
By the definition of $q_t$,
\[
w_{q_t}=\min_{p\in C_t}w_p=o_t.
\]
Therefore,
\[
\sum_{p\in D_t}w_p
\le
\sum_{i_p\in S_t}w_{i_p}+o_t.
\]

Summing over all times $t$ such that $S_t\neq\emptyset$, we obtain
\[
\cost_R(C,D)
\le
\sum_t\sum_{i_p\in S_t}w_{i_p}
+
\sum_{t:\,S_t\neq\emptyset}o_t
=
\cost(S).
\]
Thus, \(\cost_R(C,D)\le\cost(S)\), which completes the proof.

\subsection{Consequences for Paging with Time Windows}\label{sec:combining-results}

We now show that a constant-competitive algorithm for JRP-D with time-dependent opening costs, as introduced in Section~\ref{definition-variant}, immediately implies an $O(\log k)$-competitive algorithm for \pwtw.

\begin{theorem}\label{thm:online-completion}
Assume there exists a constant-competitive online algorithm for JRP-D with time-dependent opening costs under the prediction model of Section~\ref{definition-variant}. Then there exists an online algorithm that, given a sequence $R$ of page requests and a cache trajectory $C$, computes a completion sequence $D$ online such that $(C,D)$ is feasible for $R$ and
\[
\cost_R(C,D)=O(\cost_R(C)).
\]
\end{theorem}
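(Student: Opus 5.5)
The proof composes the three ingredients already established in this section. Let $\mathcal{A}$ be the assumed $\alpha$-competitive online algorithm for JRP-D with time-dependent opening costs under the prediction model. As the cache trajectory $C$ is revealed, we build the instance $J(C)$ online and feed it to $\mathcal{A}$; Theorem~\ref{thm:online-instance} guarantees that at each time $t$ the opening cost $o_t$ and all JRP requests released at $t$ are determined by the history up to $t$. Whenever $\mathcal{A}$ queries a prediction for its current earliest-deadline active request $r$, we answer $\widetilde P(r,t):=w_{p(r)}$, with ties among earliest deadlines broken by maximum page weight.

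The discussion under ``Prediction availability'', together with Lemma~\ref{lem:deadline-cache-repair}, shows that this answer is always legal. If some request $r'$ has $[s(r'),e(r')]\subseteq(t,e(r)]$, then $w_{p(r)}\le P(r,t)$ is a permitted underestimate. Otherwise, after the w.l.o.g.\ repair of $C$, we have $O_{t,e(r)}\le w_{p(r)}$, so $w_{p(r)}=P(r,t)$ exactly. Hence $\mathcal{A}$ runs on a valid instance with valid predictions. Its output $S$ is feasible for $J(C)$ and satisfies $\cost(S)\le \alpha\cdot\OPT(J(C))$.

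At each time $t$, once $\mathcal{A}$ has fixed the order $S_t$, we set $D_t$ according to the construction of Section~\ref{sec:from-jrp-to-paging}: $D_t=\emptyset$ if $S_t=\emptyset$, and $D_t=\{p:i_p\in S_t\}\cup\{q_t\}$ otherwise. This uses only $S_t$ and $C_t$, so it is online. That section proves that $(C,D)$ is feasible and that $\cost_R(C,D)\le\cost(S)$. Section~\ref{sec:paging-bound-jrp} gives $\OPT(J(C))\le 2\cost_R(C)$. Chaining these,
\[
\cost_R(C,D)\le \cost(S)\le \alpha\,\OPT(J(C))\le 2\alpha\,\cost_R(C)=O(\cost_R(C)).
\]

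The main obstacle is the w.l.o.g.\ step. Lemma~\ref{lem:deadline-cache-repair} may modify the trajectory so that a page of some deadline-$e(r)$ request lies in $C_{e(r)}$. We must check two things about this modification:
\begin{itemize}
\item it can be carried out online, since it only delays an eviction past $e(r)$, which is decided at time $e(r)$ once no unrevealed request has deadline at most $e(r)$;
\item it does not increase $\cost_R(C)$ by more than a constant factor.
\end{itemize}
Given these, the chain above holds for the repaired trajectory $\hat C$, namely $\cost_R(\hat C,D)=O(\cost_R(\hat C))=O(\cost_R(C))$. I would apply the lemma as a black box and only verify this online and cost compatibility. The remaining steps are direct citations.
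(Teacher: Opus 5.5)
Your proposal is correct and is essentially the paper's proof. You construct $J_R(C)$ online, run the assumed algorithm to obtain $S$ with $\cost(S)=O(\OPT(J_R(C)))$, convert $S$ online via Section~\ref{sec:from-jrp-to-paging} into $D$ with $\cost_R(C,D)\le\cost(S)$, and chain with $\OPT(J_R(C))\le 2\cost_R(C)$ from Section~\ref{sec:paging-bound-jrp}.

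The one difference is that you re-open the repair step of Lemma~\ref{lem:deadline-cache-repair} and conclude for a modified trajectory $\hat C$. The paper's proof does not revisit that step. It treats the repair as a ``without loss of generality'' property of the input $C$, stated in the prediction-availability paragraph, so the output is a completion of $C$ itself, as the statement requires; your $\hat C$ version outputs a completion of a different trajectory.
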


\begin{proof}
Given $R$ and $C$, we construct the instance $J_R(C)$ online as in Section~\ref{sec:from-paging-to-jrp}. We then run the assumed constant-competitive algorithm on $J_R(C)$, obtaining an online solution $S$ satisfying
\[
\cost(S)=O(\OPT(J_R(C))).
\]
Using the reduction of Section~\ref{sec:from-jrp-to-paging}, we convert $S$ online into a feasible paging solution $(C,D)$ such that
\[
\cost_R(C,D)\le\cost(S).
\]
Finally, by Section~\ref{sec:paging-bound-jrp},
\[
\OPT(J_R(C))\le O(\cost_R(C)).
\]
Combining these inequalities yields
\[
\cost_R(C,D)=O(\cost_R(C)).
\]
\end{proof}

Combining Theorem~\ref{thm:online-completion} with Lemma~\ref{lem:good-cache-trajectory}, we obtain the following.

\begin{theorem}\label{thm:main-conditional}
Assume there exists a constant-competitive online algorithm for JRP-D with time-dependent opening costs under the prediction model of Section~\ref{definition-variant}. Then there exists an $O(\log k)$-competitive online algorithm for paging with time windows.
\end{theorem}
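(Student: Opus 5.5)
The plan is to compose the two ingredients already established: the online cache trajectory of Lemma~\ref{lem:good-cache-trajectory} and the online completion procedure of Theorem~\ref{thm:online-completion}. Concretely, the algorithm runs the online construction of Lemma~\ref{lem:good-cache-trajectory}. At each time step $t$ this produces the configuration $C_t$, using only the requests revealed so far. The configuration $C_t$ is then fed, together with the page requests released at time $t$, into the completion procedure of Theorem~\ref{thm:online-completion}. That procedure builds $J_R(C)$ online (by Theorem~\ref{thm:online-instance}) and runs the assumed constant-competitive JRP-D algorithm. It then converts the chosen order $S_t$ into the load set $D_t$ as in Section~\ref{sec:from-jrp-to-paging}. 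The output paging solution is the pair $(C,D)$.

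The analysis is a chain of inequalities. By Theorem~\ref{thm:online-completion}, $(C,D)$ is feasible for $R$ and satisfies $\cost_R(C,D)=O(\cost_R(C))$. By Lemma~\ref{lem:good-cache-trajectory}, $\cost_R(C)=O(\log k)\cdot\OPT$. Multiplying the constants gives $\cost_R(C,D)=O(\log k)\cdot\OPT$. Since the trajectory construction is randomized, the second inequality holds in expectation, so the bound is on the expected cost. Taking expectations over the first inequality is harmless because it holds pointwise for every realized trajectory $C$.

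The main point that needs care is the \emph{online} composition, i.e., the order of information flow within a time step. I will check that $C_t$ is fixed before the completion procedure needs $o_t=\min_{p\in C_t}w_p$ and the critical set $E_t$. Theorem~\ref{thm:online-instance} guarantees that nothing beyond the history up to $t$ is needed. I also need that the prediction values supplied to the JRP algorithm (the weight $w_{p(r)}$ of the earliest-deadline request) are valid for \emph{every} trajectory the randomized procedure may output. Otherwise the constant competitiveness of the JRP-D algorithm would not apply pathwise.

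That validity rests on the repair assumption of Lemma~\ref{lem:deadline-cache-repair}: that some earliest-deadline page lies in $C_{e(r)}$. So I would verify that this repair can be applied to the online trajectory of Lemma~\ref{lem:good-cache-trajectory} without increasing $\cost_R(C)$ by more than a constant, and online, by delaying evictions by one step. I expect this to be the only nontrivial step. Once it holds, the theorem follows directly from the displayed chain.
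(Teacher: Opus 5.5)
Your proposal is correct and is essentially the paper's proof. The paper takes the trajectory $C$ of Lemma~\ref{lem:good-cache-trajectory}, applies Theorem~\ref{thm:online-completion} to get $\cost_R(C,D)=O(\cost_R(C))$, and chains this with $\cost_R(C)=O(\log k)\cdot\OPT$. The prediction-validity issue you flag is not re-checked at this level: the paper absorbs it into the construction of $J(C)$ (the prediction-availability paragraph of Section~\ref{sec:from-paging-to-jrp}, which uses Lemma~\ref{lem:deadline-cache-repair} as a without-loss-of-generality step), hence into Theorem~\ref{thm:online-completion}, so citing that theorem already covers it.
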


\begin{proof}
Let $C$ be the cache trajectory produced by the algorithm of Lemma~\ref{lem:good-cache-trajectory}. By Theorem~\ref{thm:online-completion}, we can compute online a loading sequence $D$ satisfying

\[
\cost_R(C,D)=O(\cost_R(C)).
\]

By Lemma~\ref{lem:good-cache-trajectory},

\[
\cost_R(C)=O(\log k)\cdot\OPT.
\]

Combining the bounds yields

\[
\cost_R(C,D)=O(\log k)\cdot\OPT.
\]
\end{proof}

The next section is devoted to proving the assumption in Theorems~\ref{thm:online-completion} and~\ref{thm:main-conditional} on the existence of a constant-competitive online algorithm for JRP-D with time-dependent opening costs.

\section{JRP-D with Time-Dependent Opening Costs}\label{sec:jrp-algorithm}

In this section, we study the variant of JRP-D with time-dependent opening costs introduced in Section~\ref{definition-variant}. Recall that placing an order at time $t$ incurs opening cost $o_t$, and that the algorithm has access to the predictions described in Section~\ref{definition-variant}. Throughout this section, we assume that $o_t$ is a power of two for every $t$. This can be achieved by rounding each opening cost up to the next power of two, increasing the cost of any solution by at most a factor of two.

We now present a constant-competitive online algorithm that combines the classical batching approach of Bienkowski \etal~\cite{bienkowski2014better} with predictions of future opening costs to handle their variability over time. 

\subsection{Algorithm}
For an active item $i$, we refer to the deadline
of its earliest-deadline active request as the
\emph{active deadline} of $i$. At each time $t$,
the algorithm constructs a set $X_t$ of items
and places an order for $X_t$ if it is nonempty.

The algorithm consists of two stages.
Stage~I initiates an order either to meet a
deadline or based on a prediction of future
opening costs. It then greedily adds active
items as long as the total cost of $X_t$ does
not exceed the opening cost $o_t$.
Stage~II uses predictions to determine which
additional active items to include.
Throughout both stages, the algorithm maintains
and processes the active items in nondecreasing
order of their active deadlines, breaking ties
by nonincreasing item cost and then by item index.

\vspace{0.1in}

\noindent
The algorithm performs the following at each time $t$:
\paragraph{Stage I: Deadline items and greedy batching.}

The algorithm initializes
\[
X_t=\{i:\text{the active deadline of $i$ is $t$}\}.
\]

\noindent
If $X_t=\emptyset$, let $r$ be the first active request in the maintained order. If no active request exists, or if
\[
\widetilde P(r,t)\le o_t,
\]
the algorithm proceeds directly to time $t+1$.

Otherwise, if $W(X_t)\le o_t$, the algorithm considers the remaining active items and greedily adds them to $X_t$ as long as the total weight does not exceed $o_t$. The greedy procedure stops when the next item cannot be added without exceeding $o_t$. 

The algorithm then proceeds to Stage~II.
\paragraph{Stage II: Prediction-based orders.}
The algorithm continues with the set $X_t$ constructed in Stage~I and considers the remaining active items in the maintained order. For each such item $i$, let $r$ be its earliest-deadline active request. The algorithm stops if either of the following conditions holds:
\begin{enumerate}
    \item \(\widetilde P(r,t)\le o_t\);
    \item \(\widetilde P(r,t)\le W(X_t)+w_i\).
\end{enumerate}
Otherwise, the algorithm adds $i$ to $X_t$ and
continues to the next item.

After stopping, or after considering all remaining
active items, the algorithm places an order for
$X_t$ if it is nonempty and proceeds to time $t+1$.

\paragraph{Prediction availability.}
During the construction of $X_t$, requests for items already included in $X_t$ are treated as inactive. Whenever the algorithm queries $\widetilde P(r,t)$, the request $r$ has the earliest deadline among the remaining active requests, and its item has maximum cost among the remaining active items with that deadline. Thus, every prediction queried by the algorithm is available under our prediction model.

\subsection{Competitive Analysis}

We prove that the algorithm is constant-competitive. Fix an optimal offline solution, denoted by $\OPT$, and partition the time horizon into phases according to the order times of $\OPT$.

\begin{definition}
A \emph{phase} is a half-open interval $[g,h)$ between two consecutive order times of $\OPT$.
\end{definition}

Fix a phase $[g,h)$, and let
$t_1<t_2<\cdots<t_\ell$ be the order times of the
algorithm during this phase. For convenience,
define $t_0=g$. We represent each item service
by a pair $(i,t)$, indicating that item $i$ is
served at time $t$. We partition these pairs
into two sets:
\begin{itemize}
    \item $M$: pairs $(i,t)$ for which the active
    deadline of $i$ at $t$ is at most $h-1$.
    \item $N$: pairs $(i,t)$ for which the active
    deadline of $i$ at $t$ is at least $h$.
\end{itemize}
When $t$ is clear from context, we write $i\in M$ instead of $(i,t)\in M$, and similarly for $N$. We also refer to items from $M$ or $N$. Note that the same item may appear in both $M$ and $N$ during a phase, through services at different times.

We analyze the costs associated with $M$ and $N$ separately. We show that the costs associated with $M$ can be charged to the corresponding item costs incurred by $\OPT$ and to its opening cost at time $g$. The costs associated with $N$ are bounded in terms of the opening cost incurred by $\OPT$ at time $h$.

The interval preceding the first $\OPT$ order is treated as an initial phase containing only items from $N$, whose costs are charged to the opening cost of the first $\OPT$ order. Similarly, the interval following the last $\OPT$ order is treated as a final phase containing only items from $M$. Its costs are bounded by the analysis of $M$ and charged to the opening cost of the last $\OPT$ order and the corresponding item costs incurred by $\OPT$.

Each item cost and opening cost incurred by $\OPT$ is charged at most a constant number of times. Thus, summing these bounds over all phases yields the following theorem.

\begin{theorem}\label{thm:jrp-2}
The algorithm is $O(1)$-competitive.
\end{theorem}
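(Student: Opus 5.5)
We follow the phase decomposition set up above. Fix a phase $[g,h)$ with algorithm order times $t_1<\dots<t_\ell$. Each algorithm order at $t_j$ costs $o_{t_j}+W(X_{t_j})$. First we show that the item cost of every order is $O(o_{t_j})$ plus the cost of deadline-forced items. Stage~I greedy stops at weight at most $o_t$. In Stage~II an item $i$ is added only when $\widetilde P(r,t)>W(X_t)+w_i$. Since $\widetilde P(r,t)\le P(r,t)$, each Stage~II addition is dominated by a prediction value, and we aim to show it is at most $2\max\{o_t,\ \text{weight of deadline items}\}$ plus a quantity chargeable to $\OPT$'s opening cost. Thus it suffices to bound $\sum_j o_{t_j}$ together with the forced-item weights in each phase. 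The plan is to charge $M$-services and their openings to $\OPT$'s order at $g$ and its item costs, and $N$-services to $\OPT$'s opening cost at $h$.

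\emph{Bounding $M$.} Any item $i$ with an active request of deadline at most $h-1$ that is active after $g$ must have been served by $\OPT$ at time $g$: $\OPT$ places no order in $(g,h)$, so a request released after $g$ with deadline before $h$ is impossible. Hence every $M$-item belongs to $\OPT$'s order at $g$, and the algorithm serves each such item at most once in the phase, since after the service it has no remaining active request with deadline below $h$ that was released by $g$. So the $M$ item costs are at most the item costs of $\OPT$ at $g$. For the openings, each order $t_j$ is triggered either by a deadline in $M$ or by a prediction with $\widetilde P(r,t_j)>o_{t_j}$. In the deadline-triggered case we would use the greedy filling: if $o_{t_j}$ is not covered by items added, the order cleared all active items of small total weight, and we charge $o_{t_j}$ to those items. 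The geometric rounding of $o_t$ to powers of two is what should make consecutive openings of equal or increasing levels sum to $O(o_g)$ plus item costs, each cheaper opportunity being foreseen via $P$.

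\emph{Bounding $N$.} Every $N$-item has active deadline at least $h$. Orders that serve only $N$-items were triggered by predictions $\widetilde P(r,t)>o_t$ for a request with $e(r)\ge h$. We will show that either the exact prediction case applies, giving $P(r,t)\le\max\{o_h,w_i\}$ and hence $o_t<o_h$ or a bound by item costs, or the window contains a full request $r'$ inside $(t,e(r)]$, which $\OPT$ must also pay for. Since $o_t$ values are powers of two, and after ordering at level $o_t$ Stage~II keeps adding items until the predicted value is dominated, the next prediction-triggered order in the phase should have strictly larger opening cost. The openings then form a geometric sequence bounded by $2\,o_h$. The $N$ item weights added in Stage~II are bounded by the final $\widetilde P\le P\le\max\{O_{t,e(r)},w_i\}$, again $O(o_h)$ plus $\OPT$'s item cost at $h$.

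The main obstacle is the geometric-increase argument: proving that within a phase the prediction-triggered openings double, and that Stage~II item weight is bounded by the triggering prediction without double-counting across phases. The delicate part is the underestimated-prediction case. Finally we sum over phases, including the initial phase with only $N$ and the final phase with only $M$. Each $\OPT$ opening is charged $O(1)$ times, from the phase ending and the phase starting there, and each $\OPT$ item cost $O(1)$ times, which yields Theorem~\ref{thm:jrp-2}.
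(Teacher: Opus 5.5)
\textbf{There is a genuine gap.} Your skeleton (phases between consecutive $\OPT$ orders, the $M/N$ split, charging $M$ to $\OPT$'s item costs and $o_g$, and $N$ to $o_h$ via a geometric sequence) is the paper's. However, the steps that carry the proof are missing, and the substitutes you sketch fail.

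\textbf{Opening costs of $M$-orders.} This is the central gap.
\begin{itemize}
\item Charging $o_{t_j}$ to the ``items of small total weight'' cleared by the order is invalid, precisely because their weight is below $o_{t_j}$.
\item Power-of-two rounding does not help for $M$, since $M$-openings need not be monotone.
\end{itemize}
The paper argues in two steps.
\begin{itemize}
\item For every $M$-order $t_\ell$ other than the last one $t_j$: the next $M$-item $i_{\ell+1}$ is already active and first in deadline order. So the batch at $t_\ell$ stopped at it, giving $o_{t_\ell}<W(X_{t_\ell})+w_{i_{\ell+1}}$.
\item For the last $M$-order, the key fact is that predictions are exact for requests with deadline $<h$. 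A window $[s(r'),e(r')]\subseteq(t,e(r)]\subseteq(g,h)$ would force an $\OPT$ order inside the phase. So the underestimation case you call delicate never arises for $M$.
\end{itemize}
Using exactness, suppose $o_{t_j}>o_{t_{j-1}}$ and $o_{t_j}>w_{i_j}$. Then $o_{t_j}$ is the minimum opening cost on $(t_{j-1},e(r)]$: a cheaper time in $(t_j,e(r)]$ would make Stage~I skip $t_j$, and a cheaper time $t'\in(t_{j-1},t_j)$ would make the algorithm serve $i_j$ at $t'$. Hence $\widetilde P(r,t_{j-1})=o_{t_j}>o_{t_{j-1}}$. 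The second stopping condition of Stage~II at $t_{j-1}$ then gives $o_{t_j}\le W(X_{t_{j-1}})+w_{i_j}$. The case $j=1$ is excluded by showing the algorithm would already have served $i_1$ at $g$. Nothing in your plan produces this bound.

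\textbf{Strict increase of $N$-openings.} This does not come from Stage~II ``adding items until the prediction is dominated.'' It follows directly from an $N$-item $i$ with request $r$ in the order at $e_\ell$. We have $\widetilde P(r,e_\ell)>o_{e_\ell}$, and $\widetilde P(r,e_\ell)>w_i$ (by the Stage~II condition, or because $w_i\le o_{e_\ell}$ if $i$ was added greedily). Together with $\widetilde P\le P=\max\{O_{e_\ell,e(r)},w_i\}$, this gives $o_{e_\ell}<O_{e_\ell,e(r)}\le o_{e_{\ell+1}}$, since $e(r)\ge h$. Only $\widetilde P\le P$ is used, so underestimation is harmless here too, and no appeal to a request $r'$ inside the window is needed.

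\textbf{Item costs of $N$-orders.} Bounding each $N$-order's item weight by $O(o_h)$ does not sum to $O(o_h)$, because a phase can contain many such orders. You need the per-order bound $W(X_{e_\ell})<O_{e_\ell,e(r_\ell)}\le o_{e_{\ell+1}}$ (and $<o_h$ for the last order). It follows from $w_{i_\ell}\le W(X_{e_\ell})<\widetilde P(r_\ell,e_\ell)\le\max\{O_{e_\ell,e(r_\ell)},w_{i_\ell}\}$. Two related problems in your proposal:
\begin{itemize}
\item Your opening claim, that an order's item cost is $O(o_t)$ plus its deadline items, is false: Stage~II items can weigh up to the \emph{next} opening cost.
\item Charging $N$-items to ``$\OPT$'s item cost at $h$'' is unjustified, since $\OPT$ need not serve them there.
\end{itemize}
You must also treat the mixed order at $t_j$ when its Stage~II adds $N$-items.

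\textbf{Minor point on $M$ item costs.} An $M$-item is served by $\OPT$ no later than $g$, not necessarily at $g$. So the per-phase charge to $\OPT$'s order at $g$ is wrong. The correct charging is global: $\OPT$ must serve the item between any two consecutive $M$-appearances.
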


\subsubsection{Bounding the Cost of $M$}

The total item cost incurred by the algorithm for services in $M$ can be charged to the corresponding item costs incurred by $\OPT$. Indeed, since $\OPT$ places no orders during $(g,h)$, the requests defining the active deadlines of these items must be served by $\OPT$ by time $g$. 

\begin{lemma}\label{lem:M-item-costs}
The total item cost incurred by the algorithm for
items in $M$, summed over all phases, is at most
the total item cost incurred by $\OPT$.
\end{lemma}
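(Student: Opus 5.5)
We will construct an injective charging map from algorithm services in $M$ to item services performed by $\OPT$. Fix a phase $[g,h)$ and a pair $(i,t)\in M$, so $t\in[g,h)$ and item $i$ has an active request $r$ at time $t$ whose deadline satisfies $e(r)\le h-1$. Since $r$ is active at $t$, we have $s(r)\le t$.

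\textbf{Step 1: $\OPT$ serves $r$ at $g$.} $\OPT$ must serve $r$ at some order time in $[s(r),e(r)]$. Its orders lie at $g$ and outside $(g,h)$, and $e(r)\le h-1<h$. Hence $\OPT$ serves $r$ at some order time $\tau\le g$ with $\tau\ge s(r)$. We also need $\tau\ge$ the time at which $r$ became active relative to the algorithm's prior services. Concretely, we charge $(i,t)$ to $\OPT$'s service of item $i$ at its last order time $\tau_r\le g$ with $\tau_r\ge s(r)$, i.e.\ the latest $\OPT$ order at or before $g$ that includes $i$ and lies in $r$'s window. Since $\OPT$'s last order before $h$ is $g$, this $\tau_r$ is the last $\OPT$ order containing $i$ that is at most $g$. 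The charged cost $w_i$ equals the algorithm's item cost for the pair.

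\textbf{Step 2: injectivity.} Suppose two algorithm services $(i,t)$ and $(i,t')$ with $t<t'$, both in $M$ (possibly in different phases), are charged to the same $\OPT$ service $(i,\tau)$. Let $r$ and $r'$ be the defining requests. The algorithm serves $i$ at $t$, so every request for $i$ released by $t$ is served at time $t$. Hence $r'$ was released after $t$, which gives $s(r')>t\ge s(r)$.

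Now $(i,\tau)$ serves $r'$, so $\tau\ge s(r')>t$. But $(i,\tau)$ also serves $r$, which requires $\tau\le e(r)$. Moreover, $r$ is active at $t$, and the algorithm serves it at $t$, which requires $t\le e(r)$. These constraints are not yet contradictory, so we need the phase structure. Let $[g,h)$ be the phase containing $t$. Since $\OPT$ serves $r$ at an order time $\le g\le t$, we have $\tau\le g\le t$. Combined with $\tau>t$, this is a contradiction.

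Carrying this through requires checking that ``charged to $\tau$'' indeed forces $\tau\le g(t)$. It does, because we defined the charge as an $\OPT$ order time at or before $g$. It also forces $\tau\ge s(r')$, which holds because the charged order serves the defining request.

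\textbf{Step 3: conclusion.} Summing $w_i$ over all pairs in $M$ across phases, including the final phase, gives at most the total $\OPT$ item cost, by injectivity.

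\textbf{Main obstacle.} The main delicacy is the boundary behavior of phases. We must confirm that an $\OPT$ order at exactly $g$ counts as ``by time $g$'' and that the initial phase contains no $M$ pairs, or else treat it separately. We must also handle multiple requests for $i$ at a single time, which is fine because one algorithm service at $t$ yields a single pair $(i,t)$.
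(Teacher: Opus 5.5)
Your proof is correct and follows the paper's argument. Both charge each $M$-service $(i,t)$ to an \OPT service of item $i$ at some time at or before $g$, which must exist because \OPT places no orders in $(g,h)$. Both get distinctness from the same fact: a later $M$-service of $i$ must be due to a request released after $t$, so \OPT serves it strictly later than $t$.

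Your explicit map and the single chain $s(r')>t\ge g\ge\tau\ge s(r')$ handle in one step the paper's two separate cases (repeats within a phase and repeats across phases). The initial-phase concern you raise is immediate: no request active there can have a deadline before \OPT's first order.
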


\begin{proof}
Fix a phase $[g,h)$. Since $\OPT$ places no orders
in $(g,h)$, every request defining the active
deadline of an item in $M$ must be served by
$\OPT$ no later than time $g$.

Moreover, each item appears in $M$ at most once
during the phase. Indeed, whenever the algorithm
serves an item, it satisfies all its currently
active requests. Any subsequent appearance of
that item must therefore be due to a request
arriving after its previous service, which cannot
have a deadline before $h$.

Finally, between two consecutive appearances of
the same item in $M$ across different phases,
$\OPT$ must also serve that item. Hence, each
appearance can be charged to a distinct item
cost incurred by $\OPT$.
\end{proof}

By Lemma~\ref{lem:M-item-costs}, it remains to bound the opening costs of orders serving items from $M$. We will use the following observation throughout the analysis.

\begin{lemma}\label{lem:exact-prediction}
Let $r$ be an active request at time $t\in[g,h)$ with $t<e(r)<h$. Then, $\widetilde P(r,t)=P(r,t)$.
\end{lemma}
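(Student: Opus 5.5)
The plan is to argue by contradiction with the prediction model. By the definition in Section~\ref{definition-variant}, $\widetilde P(r,t)$ may differ from $P(r,t)$ only if some request $r'$ satisfies $[s(r'),e(r')]\subseteq(t,e(r)]$. So it suffices to show that, under the hypotheses $t\in[g,h)$ and $t<e(r)<h$, no such $r'$ exists. Once this is established, the model forces $\widetilde P(r,t)=P(r,t)$ and the lemma follows.

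The key step uses the phase structure. Suppose some request $r'$ had $[s(r'),e(r')]\subseteq(t,e(r)]$. Since $t\ge g$ and $e(r)<h$, this gives
\[
g\le t<s(r')\le e(r')\le e(r)\le h-1 .
\]
Hence the entire window of $r'$ lies inside the open interval $(g,h)$. By the definition of a phase, $\OPT$ places no order at any time in $(g,h)$, so $\OPT$ never serves $r'$ within its window. This contradicts the feasibility of $\OPT$. Therefore no such $r'$ exists, and the ``otherwise'' clause of the prediction definition applies.

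It remains to check the boundary phases. For the initial phase, take $g$ to be the first time step; then no order of $\OPT$ lies in $[g,h)$ at all, and the same contradiction applies. For the final phase, $h$ is effectively $T+1$, so $r'$ again falls in a window with no $\OPT$ orders, and the argument is unchanged.

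I expect no real obstacle here; the argument is essentially definitional. The one point needing care is the strict inequality $s(r')>t\ge g$: it is what excludes $\OPT$'s order at time $g$ itself from serving $r'$.
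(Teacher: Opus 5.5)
Your proof is correct and is essentially the paper's argument: any $r'$ with $[s(r'),e(r')]\subseteq(t,e(r)]\subseteq(g,h)$ would have to be served by $\OPT$ during $(g,h)$, where $\OPT$ places no orders, so the ``otherwise'' clause of the prediction model forces $\widetilde P(r,t)=P(r,t)$. Your explicit check of the initial and final phases is a small addition that the paper leaves implicit; it does not change the argument.
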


\begin{proof}
Suppose there exists a request $r'$ such that \([s(r'),e(r')]\subseteq(t,e(r)]\). Then every feasible solution must place an order serving $r'$ during $(t,e(r)]\subseteq(g,h)$, contradicting the fact that $\OPT$ places no orders in $(g,h)$. Hence, by the definition of the prediction model, $\widetilde P(r,t)=P(r,t)$. 
\end{proof}

Let $j$ be the last index such that the order placed at time $t_j$ contains an item from $M$. We bound the opening costs of the orders serving items from $M$ using two lemmas: the first bounds the opening costs of all orders preceding $t_j$, while the second bounds the opening cost at $t_j$.

\begin{lemma}\label{lem:M-intermediate}
For every $1\le\ell<j$, let $i_{\ell+1}$ denote the first item from $M$ served by the order placed at time $t_{\ell+1}$. Then,
\[
o_{t_\ell}<W(X_{t_\ell})+w_{i_{\ell+1}}.
\]
\end{lemma}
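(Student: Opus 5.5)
The plan is to show that $i_{\ell+1}$ is the \emph{first} active item, in the algorithm's maintained order, that was left out of $X_{t_\ell}$. Once that is established, the inequality follows from the greedy rule of Stage~I, and Stage~II plays no role.

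\emph{Step 1: $i_{\ell+1}$ was active at $t_\ell$ and was not served then.} Let $r$ be the request defining the active deadline of $i:=i_{\ell+1}$ at time $t_{\ell+1}$, so $e(r)\le h-1$. Since $\OPT$ places no orders in $(g,h)$, the argument of Lemma~\ref{lem:M-item-costs} shows that $\OPT$ serves $r$ no later than $g$. Hence $s(r)\le g\le t_\ell$. The request $r$ is still active at $t_{\ell+1}$. Serving an item serves all of its active requests, so $i\notin X_{t_\ell}$. Moreover, $i$ was active at $t_\ell$ with $t_\ell<e(r)<h$, because every item whose active deadline is $t_\ell$ is placed in $X_{t_\ell}$.

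\emph{Step 2: every item preceding $i$ in the maintained order at $t_\ell$ belongs to $X_{t_\ell}$.} Suppose some active item $i'$ precedes $i$ at $t_\ell$ but $i'\notin X_{t_\ell}$. Its active request $r'$ satisfies $e(r')\le e(r)<h$. The algorithm places no orders in $(t_\ell,t_{\ell+1})$, so $r'$ is still active at $t_{\ell+1}$. Feasibility of the algorithm, which always serves items at their active deadlines, then gives $e(r')\ge t_{\ell+1}$. Therefore $i'$ is served at $t_{\ell+1}$ and, since $e(r')<h$, it is an item from $M$.

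The order key is (active deadline, nonincreasing cost, index). Item costs are fixed and the active deadlines of $i,i'$ are unchanged between $t_\ell$ and $t_{\ell+1}$. So $i'$ still precedes $i$ at $t_{\ell+1}$. Both stages add items in this order, so $i'$ would be an $M$-item of the order at $t_{\ell+1}$ that precedes $i$. This contradicts the choice of $i_{\ell+1}$ as the first such item. I expect this step to be the main point needing care: one has to check that the ordering genuinely persists, and that the initial deadline set of Stage~I does not break the prefix structure. Items with active deadline $t_\ell$ come first in the order anyway, so it does not.

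\emph{Step 3: reading off Stage~I.} An order was placed at $t_\ell$, so Stage~I did not skip to $t_\ell+1$. Consider the two ways Stage~I can continue.
\begin{itemize}
\item If $W(X)>o_{t_\ell}$ after the deadline items were added, then $o_{t_\ell}<W(X_{t_\ell})\le W(X_{t_\ell})+w_i$, because Stage~II only adds items.
\item Otherwise Stage~I ran the greedy procedure, which scans items in the maintained order. By Step~2, everything before $i$ ends up in $X_{t_\ell}$ and $i$ does not. Hence the greedy procedure either stopped exactly at $i$, or stopped at an earlier item that Stage~II later added. In the second case the greedy procedure could not reach $i$ without first failing on that earlier item, so the failure to fit is inherited. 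In all cases the set $X^{I}$ at the end of Stage~I satisfies $W(X^{I})+w_{i'}>o_{t_\ell}$ for the item $i'$ on which the greedy procedure stopped.
\end{itemize}

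To conclude cleanly, I will argue that the greedy stopping item is $i$ itself, or otherwise that it lies in $X_{t_\ell}$. In the latter case $o_{t_\ell}<W(X^{I})+w_{i'}\le W(X_{t_\ell})$. In either case $o_{t_\ell}<W(X_{t_\ell})+w_{i_{\ell+1}}$.
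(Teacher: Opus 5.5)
Your proof is correct and follows the paper's argument. You identify $i_{\ell+1}$ as the first item in the maintained order left out of $X_{t_\ell}$, and then split on whether the greedy stopping item is $i_{\ell+1}$ itself or an earlier item that Stage~II later added, which forces $W(X_{t_\ell})>o_{t_\ell}$. Your Steps~1--2 and the first bullet of Step~3 make explicit what the paper leaves implicit: why $i_{\ell+1}$ is the first remaining item, and the case where the deadline items alone exceed $o_{t_\ell}$, so the greedy procedure never runs.
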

\begin{proof}
Fix $1\le \ell<j$. Since $i_{\ell+1}\in M$, it is already active at time $t_\ell$ and is the first remaining item in deadline order after the order at $t_\ell$. If Stage~II adds any item, it must first add the item that terminated the greedy procedure in Stage~I, implying $W(X_{t_\ell})>o_{t_\ell}$.

Otherwise, Stage~II adds no items and $i_{\ell+1}$ is the item that terminated the greedy procedure in Stage~I, and hence
\[
o_{t_\ell}
<
W(X_{t_\ell})+w_{i_{\ell+1}}.
\]
Thus, the claimed inequality holds in all cases.
\end{proof}
It remains to bound the opening cost of the last order serving an item from $M$, placed at time $t_j$. If $t_j=g$, its opening cost can be charged directly to $\OPT$. Otherwise, recall that $t_0=g$. If $o_{t_j}\le o_{t_{j-1}}$, its opening cost can be charged to the preceding opening cost (or to $\OPT$ when $j=1$). The following lemma handles the remaining case.

\begin{lemma}\label{lem:M-last}
Suppose $t_j>g$ and $o_{t_j}>o_{t_{j-1}}$.
Let $i_j$ be the first item from $M$ served by
the order placed at time $t_j$. Then,
\[
o_{t_j}\le W(X_{t_{j-1}})+w_{i_j}.
\]
\end{lemma}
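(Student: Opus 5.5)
We plan to prove the bound through the intermediate inequality $o_{t_j}\le P(r,t_{j-1})$, and then use the stopping rule of the algorithm at time $t_{j-1}$ to bound $P(r,t_{j-1})$. Here $r$ is the earliest-deadline active request of $i_j$ at the relevant times, and we write $e:=e(r)$.

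\emph{Setup.} Since $(i_j,t_j)\in M$, the request $r$ has $e\le h-1$. As in Lemma~\ref{lem:M-item-costs}, $\OPT$ places no orders in $(g,h)$, so $\OPT$ serves $r$ no later than $g$. Hence $s(r)\le g\le t_{j-1}$. The algorithm places no order in $(t_{j-1},t_j)$, so $r$ is active throughout $[t_{j-1},t_j]$, and $t_{j-1}<t_j\le e<h$. Therefore Lemma~\ref{lem:exact-prediction} applies at every $\tau\in[t_{j-1},t_j)$, giving $\widetilde P(r,\tau)=P(r,\tau)=\max\{O_{\tau,e},w_{i_j}\}$; it also applies at $t_j$ when $t_j<e$. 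Next we check that $r$ is the first request in the maintained order at each $\tau\in(t_{j-1},t_j]$. The items preceding $i_j$ are exactly the items served at $t_{j-1}$. Any other active item with an earlier deadline would lie in $M$ and would have to be served before $i_j$, which contradicts the choice of $i_j$. A tie with a heavier item is handled the same way.

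\emph{Key step: $o_{t_j}\le P(r,t_{j-1})$.} Let $m:=O_{t_{j-1},e}$, and let $\tau^*\in(t_{j-1},e]$ be the \emph{latest} time attaining $o_{\tau^*}=m$.
\begin{itemize}
\item Suppose $\tau^*<t_j$. No order is placed at $\tau^*$, so the algorithm skipped at $\tau^*$, which requires $\widetilde P(r,\tau^*)\le o_{\tau^*}$. This gives $O_{\tau^*,e}\le m$. But $\tau^*$ is the latest minimizer and $\tau^*<e$, so $O_{\tau^*,e}>m$, a contradiction. Hence $\tau^*\ge t_j$.
\item If $\tau^*=t_j$, then $o_{t_j}=m\le P(r,t_{j-1})$ directly.
\item Suppose $\tau^*>t_j$. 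Then $t_j<e$, so no item has its active deadline at $t_j$: an item with deadline $t_j<e$ would precede $i_j$ in $M$. Hence the order at $t_j$ was triggered by prediction, i.e., $o_{t_j}<\widetilde P(r,t_j)=\max\{O_{t_j,e},w_{i_j}\}$. Since $\tau^*\in(t_j,e]$ we have $O_{t_j,e}=m$, so $o_{t_j}<\max\{m,w_{i_j}\}=P(r,t_{j-1})$.
\end{itemize}

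\emph{Finishing.} First take $j\ge2$. At $t_{j-1}$ the item $i_j$ was the first remaining item, yet it was not added. So either Stage~I terminated the greedy procedure at $i_j$, or Stage~I ended early, and in both cases Stage~II then examined $i_j$ and stopped. Suppose it stopped by condition~1. Then $P(r,t_{j-1})\le o_{t_{j-1}}$, so $o_{t_j}\le o_{t_{j-1}}$, contradicting the hypothesis $o_{t_j}>o_{t_{j-1}}$. Hence condition~2 held, and
\[
o_{t_j}\le P(r,t_{j-1})\le W(X_{t_{j-1}})+w_{i_j}.
\]

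Now take $j=1$, so $t_0=g$ and no algorithm order is placed at $g$; we read $X_{t_0}$ as $\emptyset$. At $g$ the algorithm skipped, with $r$ first in order, because $e>g$ means no deadline item. Therefore $P(r,g)\le o_g$, which by the key step gives $o_{t_1}\le o_g$, again contradicting the hypothesis. So the case $j=1$ is vacuous, and the claim holds.

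The main obstacle is the key step, specifically ruling out a cheaper opening at some time strictly between $t_{j-1}$ and $t_j$. The latest-minimizer argument handles this by combining the exactness from Lemma~\ref{lem:exact-prediction} with the skip rule at the intermediate times. A secondary point to verify carefully is that $r$ really is the queried first request at those intermediate times.
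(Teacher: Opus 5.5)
Your overall plan (prove $o_{t_j}\le P(r,t_{j-1})$, then read off Stage~II's stopping rule at $t_{j-1}$) matches the paper's. However, the first bullet of your key step rests on a false inference. Take a time $\tau\in(t_{j-1},t_j)$ where $X_\tau$ starts empty and $r$ is first. ``No order is placed'' at $\tau$ does \emph{not} mean Stage~I skipped. Stage~I may proceed because $\widetilde P(r,\tau)>o_\tau$. Greedy batching may then fail to add $i_j$ because $w_{i_j}>o_\tau$. Stage~II may then stop at $i_j$ by condition~2 because $\widetilde P(r,\tau)\le 0+w_{i_j}$.

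The valid consequence of no order at $\tau$ is only $\widetilde P(r,\tau)\le\max\{o_\tau,w_{i_j}\}$. At $\tau^*$ this yields just $O_{\tau^*,e}\le w_{i_j}$, which does not contradict $\tau^*$ being the latest minimizer. In fact your subclaim $\tau^*\ge t_j$ is false. Take $w_{i_j}=4$, $o_{t_{j-1}}=1$, $o_{t_{j-1}+1}=2$, and $t_j=e=t_{j-1}+2$ with $o_{t_j}=4$. The algorithm stops at $i_j$ at time $t_{j-1}$ by condition~2. It places no order at $t_{j-1}+1$, for exactly the reason above. It then serves $i_j$ at its deadline. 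Yet $\tau^*=t_{j-1}+1<t_j$. Your $j=1$ case uses the same invalid inference at time $g$.

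The paper closes this gap in two moves. First, it dismisses the case $o_{t_j}\le w_{i_j}$, where the lemma is immediate. Second, having shown (as in your third bullet) that $o_{t_j}$ is minimal on $[t_j,e]$, it takes $t'$ to be the \emph{last} time in $(t_{j-1},t_j)$ with $o_{t'}<o_{t_j}$, rather than a global minimizer. Then $O_{t',e}=o_{t_j}$, so $\widetilde P(r,t')=o_{t_j}>\max\{o_{t'},w_{i_j}\}$. This rules out the skip. If greedy cannot add $i_j$, it also rules out both stopping conditions of Stage~II, forcing service at $t'<t_j$.

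The reduction $o_{t_j}>w_{i_j}$ also repairs your $j=1$ case: then $P(r,g)\ge o_{t_1}>\max\{o_g,w_{i_1}\}$ genuinely contradicts the absence of an order at $g$. Your latest-minimizer route could be salvaged by iterating. In the bad case, some time in $(\tau^*,e]$ has cost at most $w_{i_j}<o_{t_j}$. If that time lies after $t_j$, it forces a skip at $t_j$; otherwise you repeat the argument from there. As written, though, the step fails.
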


\begin{proof}
If $o_{t_j}\le w_{i_j}$, the claim is immediate. Hence, assume that $o_{t_j}>w_{i_j}$. Let $r$ be the earliest-deadline active request of $i_j$ at time $t_{j-1}$. Since $i_j\in M$, we have $e(r)<h$, and therefore, by Lemma~\ref{lem:exact-prediction}, $\widetilde P(r,t)=P(r,t)$ for every $t\in[t_{j-1},t_j]$ satisfying $t<e(r)$.

We claim that $o_{t_j}$ is the minimum opening cost in $(t_{j-1},e(r)]$. First, suppose there exists $t'\in(t_j,e(r)]$ such that $o_{t'}<o_{t_j}$. Then
\[
\widetilde P(r,t_j)
=
P(r,t_j)
=
\max\{O_{t_j,e(r)},w_{i_j}\}
\leq\max\{o_{t'},w_{i_j}\}
<
o_{t_j}.
\]
Since $e(r)>t_j$, the item $i_j$ is not due at time $t_j$. Moreover, since $i_j$ is the first item served at $t_j$, no active item has an earlier deadline. Thus, Stage~I initializes $X_{t_j}=\emptyset$.

As the earliest active request satisfies
$\widetilde P(r,t_j)<o_{t_j}$,
Stage~I proceeds directly to time $t_j+1$
without placing an order, contradicting
the fact that $i_j$ is served at $t_j$.

Next, suppose there exists $t'\in(t_{j-1},t_j)$ such that $o_{t'}<o_{t_j}$, and let $t'$ be the last such time. By the preceding argument, $o_{t_j}$ is the minimum opening cost in $[t_j,e(r)]$. Consequently,
\[
\widetilde P(r,t')
=
P(r,t')
=
\max\{O_{t',e(r)},w_{i_j}\}
=\max\{o_{t_j},w_{i_j}\}
=
o_{t_j}.
\]

Since no order is placed by the algorithm
between $t_{j-1}$ and $t_j$, Stage~I
initializes $X_{t'}=\emptyset$.
Moreover, $i_j$ is the first active item
in deadline order at time $t'$.

Since
\[
\widetilde P(r,t')=o_{t_j}>o_{t'},
\]
Stage~I performs greedy batching.

If $w_{i_j}\le o_{t'}$, then Stage~I
adds $i_j$ to $X_{t'}$.

Otherwise, $w_{i_j}>o_{t'}$, so Stage~I
cannot add $i_j$. However, \(
\widetilde P(r,t')
=
o_{t_j}
>
\max\{o_{t'},w_{i_j}\},
\) and hence neither stopping condition of
Stage~II holds when $i_j$ is considered.
Thus, Stage~II adds $i_j$ to $X_{t'}$.

In either case, the algorithm serves $i_j$ at time $t'<t_j$, a contradiction. Therefore, $o_{t_j}$ is the minimum opening cost in $(t_{j-1},e(r)]$, proving the claim. In particular,
\[
\widetilde P(r,t_{j-1})
=
P(r,t_{j-1})
=
\max\{o_{t_j},w_{i_j}\}
=
o_{t_j}.
\]
If $j=1$ and $t_1>g$, then the algorithm places no order at $g$. Since $\widetilde P(r,g)=o_{t_1}>o_g$, Stage~I cannot skip time $g$. If $w_{i_1}\le o_g$, greedy batching adds $i_1$. Otherwise, since $o_{t_1}>\max\{o_g,w_{i_1}\}$, Stage~II adds $i_1$. In either case, the algorithm serves $i_1$ at $g$, a contradiction. Hence, we may assume $j\ge 2$.

Since the request of $i_j$ remains active after the order at $t_{j-1}$, $i_j$ was not added during Stage~I. Moreover, it is the earliest active item remaining after that order. Therefore, Stage~II stops when considering $i_j$.

The first stopping condition cannot hold, since
\[
o_{t_{j-1}}
<
o_{t_j}
=
\widetilde P(r,t_{j-1}).
\]
Therefore, the second stopping condition
must hold, giving
\[
W(X_{t_{j-1}})+w_{i_j}
\ge
\widetilde P(r,t_{j-1})
=
o_{t_j},
\]
as required.
\end{proof}

Combining Lemmas~\ref{lem:M-intermediate}
and~\ref{lem:M-last}, we obtain
\[
\sum_{\ell=1}^{j} o_{t_\ell}
=
O\left(
o_g+
\sum_{(i,t)\in M} w_i
\right).
\]
Indeed, every order preceding $t_j$ contains only
items from $M$. Applying Lemma~\ref{lem:M-intermediate} to each
order preceding $t_j$ bounds its opening cost by the total item cost of that order and the cost of the first item served in the next order.
Since all these services belong to $M$, each
item service is charged at most twice. The final opening cost is bounded by $o_g$, the preceding opening cost, or by Lemma~\ref{lem:M-last}. By Lemma~\ref{lem:M-item-costs}, these costs,
summed over all phases, are bounded by a constant
multiple of the cost incurred by $\OPT$.

\subsubsection{Bounding the Cost of $N$}

We next consider items in $N$. We show that the opening costs of the relevant orders are strictly increasing. Since the opening costs are powers of two, this allows us to bound their total cost by $O(o_h)$ and charge it to the order placed by $\OPT$ at time $h$.

Recall that $t_j$ is the last order time during the phase at which the algorithm serves an item from $M$. Observe that every order preceding $t_j$ contains only items from $M$. Indeed, all requests with deadlines before $h$ must have arrived by time $g$, since $\OPT$ places no orders in $(g,h)$. Moreover, both stages process items in nondecreasing order of their active deadlines and cannot skip an item to serve one with a later deadline. Thus, an order containing an item from $N$ must serve all remaining items from $M$, and hence cannot precede $t_j$.

The total cost of items from $N$ added during Stage~I at $t_j$ is at most $o_{t_j}$, and is therefore accounted for by the bound on the opening cost $o_{t_j}$ in the analysis of $M$. We therefore consider the order at $t_j$ only if Stage~II adds an item from $N$, together with all subsequent orders in the phase.

Let
\[
e_1<e_2<\cdots<e_q
\]
denote these order times. If $M=\emptyset$, we instead consider all orders in the phase. For every such order at time $e_\ell$, one of the following holds:
\begin{enumerate}
    \item Stage~II adds at least one item from $N$.
    \item Stage~II adds no items, and the order
    consists entirely of items added during
    Stage~I.
\end{enumerate}

Indeed, after $t_j$, no active request has
deadline before $h$, so Stage~I initializes
$X_t=\emptyset$.

\begin{lemma}\label{lem:N-opening-increasing}
For every $\ell<q$,
\[
o_{e_\ell}<o_{e_{\ell+1}}.
\]
Moreover, for every $\ell$,
\[
o_{e_\ell}<o_h.
\]
\end{lemma}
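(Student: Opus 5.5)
The plan is to derive both inequalities from a single claim. For every order time $e_\ell$ there is an active request $r$ with $e(r)\ge h$ such that
\[
o_{e_\ell}<O_{e_\ell,e(r)}.
\]
Granting this, both parts follow at once. Every later order time $e_{\ell+1}$ lies in $(e_\ell,h)$, and $h$ lies in $(e_\ell,e(r)]$, because $e_\ell<h\le e(r)$. So both $o_{e_{\ell+1}}$ and $o_h$ are at least the minimum $O_{e_\ell,e(r)}$, which is strictly larger than $o_{e_\ell}$.

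To find the witness $r$ I will use only the always-valid upper bound
\[
\widetilde P(r,t)\le P(r,t)=\max\{O_{t,e(r)},w_i\}.
\]
This holds in both branches of the prediction model, so no exactness (in particular no appeal to Lemma~\ref{lem:exact-prediction}) is needed. Then it suffices to find an item $i$ with request $r$, active at $e_\ell$ with $e(r)\ge h$, for which
\[
\widetilde P(r,e_\ell)>o_{e_\ell}
\qquad\text{and}\qquad
\widetilde P(r,e_\ell)>w_i .
\]
Indeed, these two inequalities force the maximum in $P(r,e_\ell)$ to be attained by $O_{e_\ell,e(r)}$. Hence $O_{e_\ell,e(r)}\ge \widetilde P(r,e_\ell)>o_{e_\ell}$.

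I split according to the two cases listed before the lemma.
\begin{itemize}
\item \emph{Case 1: Stage~II adds an item $i\in N$ at $e_\ell$.} This also covers the order at $t_j$ when it is included. Take $r$ to be the earliest-deadline active request of $i$; its deadline is at least $h$ because $i\in N$. Neither stopping condition held when $i$ was added, so $\widetilde P(r,e_\ell)>o_{e_\ell}$ and $\widetilde P(r,e_\ell)>W(X_{e_\ell})+w_i\ge w_i$. Both desired inequalities hold.
\item \emph{Case 2: Stage~II adds nothing, and the order consists of Stage~I greedy items.} As noted before the lemma, after $t_j$ (or throughout the phase if $M=\emptyset$) no active request has deadline before $h$. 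So Stage~I starts with $X_{e_\ell}=\emptyset$. Let $r$ be the first active request in the maintained order and $i$ its item. Stage~I did not skip, so $\widetilde P(r,e_\ell)>o_{e_\ell}$. The order is nonempty and greedy processes items in order starting from $X_{e_\ell}=\emptyset$, so $i$ was the first item added, which requires $w_i\le o_{e_\ell}<\widetilde P(r,e_\ell)$. Both desired inequalities hold again, and $e(r)\ge h$.
\end{itemize}

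The main obstacle I expect is handling the $w_i$ term in $P$. A large item cost could make the prediction large without saying anything about future opening costs. This is exactly why the argument must extract $\widetilde P>w_i$ from either the second stopping condition or the greedy budget. A second point to check is that the case analysis really covers every considered order. In particular, the order at $t_j$ is included only when Stage~II adds an $N$ item, and otherwise Stage~I starts empty; I will verify this against the discussion preceding the lemma.
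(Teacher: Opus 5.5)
Your proposal is correct and matches the paper's proof essentially step for step. It uses the same two cases: an $N$ item added in Stage~II, or a pure Stage~I order starting from $X_{e_\ell}=\emptyset$. It relies only on the upper bound $\widetilde P(r,e_\ell)\le\max\{O_{e_\ell,e(r)},w_i\}$, together with $\widetilde P(r,e_\ell)>w_i$ obtained from the second stopping condition or the greedy budget, and it concludes the same way via $e_{\ell+1},h\in(e_\ell,e(r)]$.
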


\begin{proof}
Fix $\ell$ and consider the order at time
$e_\ell$.

First, suppose Stage~II adds an item $i\in N$,
and let $r$ be its earliest-deadline active
request when it is considered.

Since neither stopping condition holds,
\[
\widetilde P(r,e_\ell)>o_{e_\ell}
\]
and
\[
\widetilde P(r,e_\ell)>
W(X_{e_\ell}^{\mathrm{before}})+w_i
\ge w_i,
\]
where $X_{e_\ell}^{\mathrm{before}}$ denotes
the set immediately before adding $i$.

Using $\widetilde P(r,e_\ell)\le P(r,e_\ell)$,
we obtain
\[
o_{e_\ell}
<
\widetilde P(r,e_\ell)
\le
\max\{O_{e_\ell,e(r)},w_i\}.
\]
Since $w_i<\widetilde P(r,e_\ell)$, it follows that $o_{e_\ell}<O_{e_\ell,e(r)}$.

Now suppose Stage~II adds no items. Let $i$ be the first item added during Stage~I, and let $r$ be its earliest-deadline active request. Since $X_{e_\ell}$ was initially empty, $r$ is the earliest active request at time $e_\ell$. Thus, by the condition for initiating greedy batching, $\widetilde P(r,e_\ell)>o_{e_\ell}$.

Moreover, Stage~I adds $i$ while maintaining $W(X_{e_\ell})\le o_{e_\ell}$, so $w_i\le o_{e_\ell}$.
Consequently,
\[
o_{e_\ell}
<
\widetilde P(r,e_\ell)
\le
\max\{O_{e_\ell,e(r)},w_i\}
\]
implies $o_{e_\ell}<O_{e_\ell,e(r)}$. Thus, in either case, there exists an item $i\in N$ with active request $r$ such that \(o_{e_\ell}<O_{e_\ell,e(r)}\). Since $i\in N$, its active deadline satisfies $e(r)\ge h$. Therefore, for every $\ell<q$,
\[
o_{e_\ell}<o_{e_{\ell+1}}.
\]
The same argument, applied to $h$ instead of $e_{\ell+1}$, gives $o_{e_\ell}<o_h$ for every $\ell$.
\end{proof}

It remains to bound the item costs of these
orders.

Fix $\ell$. If Stage~II adds an item from $N$,
let $i_\ell$ be the last such item, and let
$r_\ell$ be its earliest-deadline active request
when it is considered. Since the second stopping
condition does not hold,
$W(X_{e_\ell})<\widetilde P(r_\ell,e_\ell)$.
Moreover,
\[
w_{i_\ell}\le W(X_{e_\ell})
<\widetilde P(r_\ell,e_\ell)
\le\max\{O_{e_\ell,e(r_\ell)},w_{i_\ell}\}.
\]
It follows that
$W(X_{e_\ell})<O_{e_\ell,e(r_\ell)}$.
Since $e(r_\ell)\ge h$, we obtain
$W(X_{e_\ell})<o_{e_{\ell+1}}$ for $\ell<q$,
and $W(X_{e_q})<o_h$.

Otherwise, Stage~II adds no items, so all
items in the order were added during Stage~I,
which guarantees $W(X_{e_\ell})\le o_{e_\ell}$.
By Lemma~\ref{lem:N-opening-increasing},
the same bounds follow.

Thus, in either case,
\[
\sum_{\ell=1}^q W(X_{e_\ell})
\le \sum_{\ell=1}^{q-1}o_{e_{\ell+1}}+o_h.
\]

By Lemma~\ref{lem:N-opening-increasing},
the opening costs $o_{e_1},\ldots,o_{e_q}$
form a strictly increasing sequence of
powers of two, with $o_{e_q}<o_h$.
Consequently, both the total opening cost
and the total item cost are $O(o_h)$, giving
\[
\sum_{\ell=1}^q
\bigl(o_{e_\ell}+W(X_{e_\ell})\bigr)
=O(o_h).
\]

Since $\OPT$ places an order at time $h$,
these costs can be charged to its opening
cost $o_h$.

\section*{AI Usage Disclosure}
The authors used ChatGPT to refine and polish the
text. All suggestions were reviewed and edited
by the authors, who take full responsibility
for the final manuscript.

\section*{Submission History}
An earlier version of this paper was submitted
to a conference in July 2026.

\bibliography{sources}

\appendix

\section{Extension to Paging with Delays}\label{app:delays}

We explain how our framework extends to the paging-with-delays model of Gupta et al.~\cite{GKP20}. Gupta et al. show that it suffices to consider a penalty version of paging with time windows. In this variant, each request has a time window and may either be served within its window or left unserved by paying a request-specific penalty. They further show that their analysis for paging with time windows extends to this penalty model.

We follow the same approach. In paging with time windows and penalties, each request $r$ is specified by a page $p(r)$, a release time $s(r)$, a deadline $e(r)$, and a penalty $\pi_r$. A solution may either serve $r$ during $[s(r),e(r)]$ or pay the penalty $\pi_r$. The objective is to minimize the sum of the costs and the penalties of all unserved requests.

Fix an integral solution to the hitting-set LP of Gupta et al.~\cite{GKP20} with objective value $X$. Recall that their conversion procedure (see Section~\ref{subsection:Reinterpretation}) transforms an integral LP solution into a \pwtw solution. The solution consists of two components: the cache trajectory $C=(C_t)$ and the sequence of load-and-evict sets $D=(D_t)$. We now use both components to partition online the requests into two sets, $R_1$ and $R_2$. If a request $r$ is not served (a penalty is paid), we add $r$ to $R_1$. Otherwise, let $t$ be the first time at which the online conversion procedure serves $r$, either through cache residency or by loading $p(r)$, and add the page request $(p(r),t,e(r))$ to $R_2$. Since the partition is determined by the online conversion procedure, both $R_1$ and $R_2$ can be constructed online.

We first bound the completion cost of $C$ with respect to $R_2$. Gupta \etal showed that the offline conversion procedure preserves the cache trajectory $C$ and only removes some of the ``load-and-evict" steps (determined online). Nevertheless, it remains feasible for every request served by the online procedure and therefore serves every request in $R_2$. Since its total cost is $O(X)$, it follows that
\[
\cost_{R_2}(C)=O(X).
\]

Moreover, every request in $R_1$ is left unserved by the online conversion procedure. Since the offline procedure only removes loads, these requests remain unserved, and the offline procedure pays their penalties. Therefore,
\[
\sum_{r\in R_1}\pi_r=O(X).
\]

Since $R_2$ can be constructed online, we apply our online completion algorithm to $C$ and $R_2$, obtaining a loading sequence $D$ satisfying
\[
\cost_{R_2}(C,D)
=
O(\cost_{R_2}(C))
=
O(X).
\]
Now, the penalties paid for the requests in $R_1$ add up to at most $O(X)$. Thus, the resulting solution has total cost $O(X)$. Combining this with the bound on the integral solution of Gupta et al.~\cite{GKP20} yields an $O(\log k)$-competitive online algorithm for paging with delays.

\section{A Local Repair of Cache Trajectories}
\label{app:deadline-cache}

\begin{lemma}\label{lem:deadline-cache-repair}
Let $S$ be a feasible paging solution, and let
$Q$ be a nonempty set of requests active at time
$t_1$, all with deadline $t_2>t_1$.
Suppose that no active or unrevealed request
outside $Q$ has deadline at most $t_2$.
Then $S$ can be modified, without increasing
its cost, so that $p(r)\in C_{t_2}$ for some
$r\in Q$. The modification can be implemented
online whenever $S$ is generated online.
\end{lemma}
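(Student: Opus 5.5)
We argue by a local exchange around time $t_2$, working with the $(C,D)$ representation of $S$. If some $r\in Q$ already has $p(r)\in C_{t_2}$, there is nothing to do. Otherwise, feasibility of $S$ means every $r\in Q$ is served during $[s(r),t_2]$. No $r\in Q$ is served at or before $t_1$ (all are active at $t_1$) and $p(r)\notin C_{t_2}$, so each $r\in Q$ is served by a load $p(r)\in D_\tau$ for some $\tau\in(t_1,t_2]$. Let $\tau^\star$ be the last time in $(t_1,t_2]$ at which a page of some request in $Q$ is loaded, and let $p^\star$ be such a page.

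Step 1: postpone to $t_2$. The hypothesis says no request outside $Q$ that is active or unrevealed at $t_1$ has deadline $\le t_2$. Hence, among the requests outside $Q$, none has its window contained in $(t_1,t_2]$ with deadline before $t_2$ in a way that forces loads strictly inside $(t_1,t_2)$. We therefore shift every load-and-evict operation of pages of $Q$ in $(t_1,t_2)$ to time $t_2$, batching them together. These operations do not change $C$, and all requests of $Q$ remain served because their deadlines are $t_2$. Requests outside $Q$ that were served by such a load have deadline $>t_2$ and are still served at $t_2$ if $t_2\ge$ their release time. If a request was released after the original load, it is also served no earlier than before. The cost does not increase, since batching only merges loads and the temporarily evicted page needed for non-degeneracy is paid at most once.

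Step 2: make the page stay at $t_2$. After Step 1, at time $t_2$ some $p^\star=p(r^\star)$, $r^\star\in Q$, is loaded, and it is evicted by the end of the step so that $p^\star\notin C_{t_2}$. We modify the trajectory at $t_2$: keep $p^\star$ in the cache at the end of $t_2$, and defer the eviction (and whatever load used the slot of $p^\star$ in $C_{t_2}$, i.e.\ some page $q\in C_{t_2}\setminus\{p^\star\}$ that would take its place) to time $t_2+1$. If $q\in E_{t_2}$, the load of $q$ moves to $t_2+1$. This is exactly the footnote argument. Any request that needs $q$ at $t_2$ has deadline $>t_2$, because the only deadline-$t_2$ requests belong to $Q$, and such a request is served at $t_2+1$ instead. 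If its window ends exactly at $t_2$ it would be in $Q$, contradiction. Costs are unchanged: the same loads happen, one step later. Configurations after $t_2+1$ are unchanged, so the modification is local.

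Main obstacle. The hard case in Step 2 is when $q$ itself is needed at $t_2$, or when $q$ is the page making $D_{t_2}$ non-degenerate. It is also possible that $C_{t_2}=C_{t_2-1}$ and the extra load is just the min-weight page reloaded. In that case we exchange $q$ with $p^\star$ and check that the min-weight page remains available for the non-degeneracy witness. For online implementability, every decision uses only information up to $t_2$ plus a one-step delay, which is valid since $S$ itself is produced online and the hypothesis is checked at $t_1$. Feasibility of requests released at $t_2+1$ must be verified; they see $C_{t_2+1}$, which is unchanged.
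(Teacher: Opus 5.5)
Your proposal has a genuine gap, and it is in Step~1. Moving a load-and-evict of a $Q$-page from some $\tau<t_2$ to $t_2$ does not preserve cost. At $t_2$ the full cache is $C_{t_2}$, so making room for the load costs the weight of some page of $C_{t_2}$: either as the non-degeneracy witness, or, after your Step~2, as the deferred reload of $q$ at $t_2+1$. That page can be much heavier than the page temporarily evicted at $\tau$. Saying it is ``paid at most once'' does not help, because the price changes.

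Concretely, let $k=2$, $C_{t_1}=\{c,h_1\}$, $w_c=w_{p^\star}=1$, $w_{h_1}=w_{h_2}=M$, $t_2\ge t_1+3$, and $Q=\{(p^\star,t_1,t_2)\}$. At time $t_1+1$, let $S$ evict $c$, load and evict $p^\star$, and reload $c$. At $t_1+2$ a request $(h_2,t_1+2,t_2+5)$ arrives and $S$ replaces $c$ by $h_2$. The hypothesis holds and $S$ costs $M+2$. After your Steps~1--2, however, some weight-$M$ page must be evicted and reloaded around $t_2$, for a total of at least $2M+1$.

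There are two further problems. First, Step~1 only moves operations that leave $C$ unchanged. Suppose in this example $S$ instead keeps $p^\star$ in place of $c$ at $t_1+1$ and then replaces it by $h_2$ at $t_1+2$. Then nothing from $Q$ is loaded at $t_2$ after Step~1, and Step~2 has no $p^\star$ to retain. Second, the hypothesis concerns unrevealed requests, so it cannot be ``checked at $t_1$''. An online version needs a fallback for when a request with deadline at most $t_2$ arrives later, for example one for $p^\star$ that your deferred load no longer serves.

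Your Step~2 already contains the right move, deferring operations to $t_2+1$, but you apply it only inside step $t_2$. The paper applies it from the moment the last request of $Q$ is served, working at the level of individual load/evict operations. Follow $S$ until the last $r\in Q$ is served, and let $K\ni p(r)$ be the cache at that instant. Then postpone \emph{all} later operations through the end of $t_2$, including those that change $C$, to the beginning of $t_2+1$, and replay them there in their original order.

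The same operations run from the same configuration $K$, so capacity and cost are unchanged and no load moves to a more expensive context. At the freeze point every request with deadline at most $t_2$ is already served. Every other request has deadline at least $t_2+1$ and sees the replayed configurations at $t_2+1$. Online, the postponed operations are buffered and flushed as soon as a request with deadline at most $t_2$ arrives. Under the hypothesis this never happens, so $K$ persists through $t_2$.
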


\begin{proof}
Follow $S$ until all requests in $Q$ have been
served. Let $r$ be the last request in $Q$ to
be served, and let $K$ be the cache configuration immediately after this
service. In particular, this configuration
contains $p(r)$.

At this point, all requests with deadline at
most $t_2$ have been served. Indeed, every
request in $Q$ has been served, while every
request outside $Q$ with such a deadline
was already served at time $t_1$, by assumption. Postpone all subsequent loadings and evictions through the end of time $t_2$ to the beginning of time $t_2+1$. Execute these operations in their original order, before the operations originally scheduled at that time.

This preserves feasibility: any request whose
service is postponed and that has not already
been served has deadline at least $t_2+1$.
Capacity is preserved because the postponed
operations start from the same
configuration $K$ and are executed in the same order.
No additional loads are introduced, so the cost
does not increase. If $t_2$ is the final time
step, the postponed operations can instead
be omitted. The configuration $K$ persists through
time $t_2$, so $p(r)\in C_{t_2}$.

To implement the modification online, buffer
the postponed operations. If a new request
with deadline at most $t_2$ arrives, execute
the buffered operations before processing it
and resume $S$. This preserves feasibility
even when the hypothesis of the lemma does
not hold. Under the hypothesis, no such
request arrives, so the configuration $K$
persists until the end of time $t_2$, as required.
\end{proof}

Thus, any feasible paging solution can be
modified at no additional cost to satisfy
the stated property for the specified
times $t_1,t_2$ and set $Q$.
\end{document}